\newtheorem{lemma}{Lemma}
\newtheorem{proposition}{Proposition}
\newtheorem{theorem}{Theorem}
\newtheorem{corollary}{Corollary}
\theoremstyle{definition}
\newtheorem{definition}{Definition}
\theoremstyle{remark}
\newtheorem{remark}{Remark}
\newcommand{\fin}{\hspace{0.5mm}}
\newcommand{\spa}{\hspace{-2mm}}
\newcommand{\tre}{\hspace{0.3mm}}
\newcommand{\quattro}{\hspace{0.4mm}}
\newcommand{\cinque}{\hspace{0.5mm}}
\newcommand{\sei}{\hspace{0.6mm}}
\newcommand{\otto}{\hspace{0.8mm}}
\newcommand{\mtre}{\hspace{-0.3mm}}
\newcommand{\mcinque}{\hspace{-0.5mm}}
\newcommand{\ima}{\mathrm{i}}
\newcommand{\siy}{\sigma_y}
\newcommand{\trasp}{{\hspace{-0.2mm}\mbox{\tiny $\mathsf{T}$}}}
\newcommand{\defi}{\mathrel{\mathop:}=}
\newcommand{\ifed}{=\mathrel{\mathop:}}
\newcommand{\tr}{\mathrm{tr}}
\newcommand{\hh}{\mathcal{H}}
\newcommand{\jj}{\mathcal{J}}
\newcommand{\kk}{\mathcal{K}}
\newcommand{\pkk}{P_{\kk}}
\newcommand{\ran}{\mathrm{ran}\hspace{0.3mm}}
\newcommand{\rank}{\mathrm{rank}}
\newcommand{\imp}{\check{P}}
\newcommand{\imq}{\check{Q}}
\newcommand{\impkk}{\check{P}_{\kk}}
\newcommand{\prkk}{\mathscr{P}_{\hspace{-0.2mm}k}\hspace{-0.2mm}(\kk)}
\newcommand{\imprkk}{\check{\mathscr{P}}_{\hspace{-0.2mm}k}\hspace{-0.2mm}(\kk)}
\newcommand{\imkk}{\check{\kk}}
\newcommand{\primkk}{\mathscr{P}_{\hspace{-0.2mm}k}\hspace{-0.2mm}(\imkk)}
\newcommand{\pimkk}{P_{\imkk}}
\newcommand{\neop}{\widetilde{P}}
\newcommand{\neopsq}{\widetilde{P}^{\hspace{0.2mm}2}}
\newcommand{\neoq}{\widetilde{Q}}
\newcommand{\bp}{\breve{P}}
\newcommand{\bq}{\breve{Q}}
\newcommand{\pper}{P^\perp}
\newcommand{\neopper}{\neop^\perp}
\newcommand{\neoqper}{\neoq^\perp}
\newcommand{\bimkk}{\mathscr{B}(\imkk)_{\mathsf{s}}}
\newcommand{\bkk}{\mathscr{B}(\kk)_{\mathsf{s}}}
\newcommand{\imbkk}{\check{\mathscr{B}}(\kk)_{\mathsf{s}}}
\newcommand{\nat}{\mathbb{N}}
\newcommand{\id}{I}
\newcommand{\ide}{\mathrm{Id}}
\newcommand{\mms}{\rho_{\star}}
\newcommand{\dom}{\mathrm{dom}\hspace{0.2mm}}
\newcommand{\spanr}{\mathrm{span}_{\hspace{0.2mm}\mathbb{R}}}
\newcommand{\spancl}{\overline{\mathrm{span}}_{\hspace{0.2mm}\mathbb{R}}}
\newcommand{\prk}{\mathscr{P}_{\hspace{-0.2mm}k}\hspace{-0.2mm}(\hh)}
\newcommand{\prdk}{\mathscr{P}_{\hspace{-0.2mm}2k}\hspace{-0.2mm}(\hh)}
\newcommand{\prm}{\mathscr{P}_{\hspace{-0.2mm}m}\hspace{-0.2mm}(\hh)}
\newcommand{\pru}{\mathscr{P}_1\hspace{-0.2mm}(\hh)}
\newcommand{\prnm}{\mathscr{P}_{\hspace{-0.4mm}\frac{n}{2}}\hspace{-0.4mm}(\hh)}
\newcommand{\fr}{\mathscr{F}(\hh)}
\newcommand{\frsa}{\mathscr{F}(\hh)_{\mathsf{s}}}
\newcommand{\stah}{\mathscr{S}(\hh)}
\newcommand{\frstah}{\mathscr{F}\hspace{-1mm}\mathscr{S}(\hh)}
\newcommand{\staj}{\mathscr{S}(\jj)}
\newcommand{\stauk}{\mathscr{S}_{\hspace{-0.2mm}k}\hspace{-0.2mm}(\hh)_{\mathsf{u}}}
\newcommand{\stauu}{\mathscr{S}_{\hspace{-0.2mm}1}\hspace{-0.2mm}(\hh)_{\mathsf{u}}}
\newcommand{\staun}{\mathscr{S}_{\hspace{-0.2mm}n}\hspace{-0.2mm}(\hh)_{\mathsf{u}}}
\newcommand{\trc}{\mathscr{B}_1\hspace{-0.2mm}(\hh)}
\newcommand{\trcsa}{\trc_{\mathsf{s}}}
\newcommand{\bou}{\mathscr{B}(\hh)}
\newcommand{\bousa}{\mathscr{B}(\hh)_{\mathsf{s}}}
\newcommand{\poscon}{\trc^{\hspace{-0.2mm}\mbox{\tiny $+$}}\hspace{-0.3mm}}
\newcommand{\poco}{\trc_{\hspace{-0.1mm}\ast}^{\hspace{-0.2mm}\mbox{\tiny $+$}}\hspace{-0.3mm}}
\newcommand{\neco}{\trc_{\hspace{-0.1mm}\ast}^{\hspace{-0.2mm}\mbox{\tiny $-$}}\hspace{-0.3mm}}
\newcommand{\sico}{\trc_{\hspace{-0.1mm}\ast}^{\hspace{-0.2mm}\mbox{\tiny $\pm$}}\hspace{-0.3mm}}
\newcommand{\prom}{\varpi}
\newcommand{\lima}{\Phi}
\newcommand{\domlima}{\dom(\lima)}
\newcommand{\limacl}{\widehat{\lima}}
\newcommand{\limb}{\Psi}
\newcommand{\limr}{\lima_0}
\newcommand{\linp}{\Lambda^{\hspace{-0.5mm}\mbox{\tiny $[k]$}}\hspace{-0.2mm}}
\newcommand{\linpone}{\Lambda^{\hspace{-0.5mm}\mbox{\tiny $[1]$}}\hspace{-0.2mm}}
\newcommand{\linpu}{\Lambda_{U}^{\hspace{-0.5mm}\mbox{\tiny $[k]$}}\hspace{-0.2mm}}
\newcommand{\map}{\phi}
\newcommand{\prelima}{\Theta}
\newcommand{\prelimmo}{\Theta^{-1}}
\newcommand{\limco}{\lima_{\mathbb{C}}}
\newcommand{\limcocl}{\widehat{\lima}_{\mathbb{C}}}
\newcommand{\syw}{\mathcal{W}}
\newcommand{\psyw}{\breve{\syw}}
\begin{document}

\title{Symmetry witnesses}

\author{
Paolo Aniello$^{1,2}$ and Dariusz Chru\'sci\'nski$^3$
\vspace{2mm}
\\ \small \it
$^1$Dipartimento di Fisica ``Ettore Pancini'', Universit\`a di Napoli ``Federico II'',
\\ \small \it
Complesso Universitario di Monte S.~Angelo, via Cintia, I-80126 Napoli, Italy
\vspace{2mm}
\\ \small \it
$^2$Istituto Nazionale di Fisica Nucleare, Sezione di Napoli,
\\ \small \it
Complesso Universitario di Monte S.~Angelo, via Cintia, I-80126 Napoli, Italy
\vspace{2mm}
\\ \small \it
$^3$Institute of Physics, Faculty of Physics, Astronomy and Informatics,
\\ \small \it
Nicolaus Copernicus University,
\\ \small \it
Grudziadzka 5, 87–100 Toru\'n, Poland
}

\date{}

\maketitle

\begin{abstract}
\noindent A \emph{symmetry witness} is a suitable subset of the space of selfadjoint
trace class operators that allows one to determine whether a linear map is a symmetry
transformation, in the sense of Wigner. More precisely, such a set is invariant with respect to
an injective densely defined linear operator in the Banach space of selfadjoint
trace class operators (if and) only if this operator is a symmetry transformation.
According to a linear version of Wigner's theorem, the set of pure states --- the rank-one projections ---
is a symmetry witness. We show that an analogous result holds for the set of projections
with a fixed rank (with some mild constraint on this rank, in the finite-dimensional case).
It turns out that this result provides a complete classification of the sets of projections with
a fixed rank that are symmetry witnesses. These particular symmetry witnesses are \emph{projectable};
i.e., reasoning in terms of quantum states, the sets of `uniform' density operators of corresponding
fixed rank are symmetry witnesses too.
\end{abstract}

%%%------------------------------------------------------------------------------
\section{Introduction}
\label{intro}
%%%------------------------------------------------------------------------------

Wigner's theorem~\cite{Wigner,Bargmann,Kadison,Simon,Ludwig,Cassinelli,Cassinelli-book,Parthasarathy,Grabowski}
is one of the pillars of quantum mechanics. It expresses with adamantine clarity the fact
the symmetry transformations of quantum mechanics are associated, in a natural way,
with the elements of the unitary-antiunitary group of the relevant Hilbert space, modulo phase factors.
The theorem can be derived in several ways and formulated in several essentially equivalent forms --- see,
e.g.,~\cite{Simon,Ludwig,Cassinelli,Cassinelli-book,Parthasarathy,Grabowski}, and references therein ---
and it has given rise to an evergreen area of research between theoretical physics and mathematics~\cite{Li,Molnar}.
The literature on this topic is so vast that it would be futile to try to give an account of it here
(in addition to the previous references, we will content ourselves with resuming a few salient facts
in sect.~\ref{background}).

In our present contribution, we will look at Wigner's theorem from a perspective close to
the theory of open quantum systems and to quantum information science~\cite{Davies-book,Breuer,Bengtsson,Nielsen},
where one typically considers both the pure and the mixed states of a quantum system.
In this context, one very often deals with \emph{linear} maps in the space of trace class operators
(the linear space where the quantum states live) that are positive and trace-preserving.
The symmetry transformations, in the sense of Wigner,
are obviously maps of this kind. Notice, however, that symmetries in Wigner's seminal work
were originally regarded as maps on \emph{pure} states only. Linearity, or convex linearity,
appear in subsequent formulations involving maps defined on more general structures
(e.g., the so-called Kadison and Jordan-Segal automorphisms, see~\cite{Simon}).
It turns out that linearity is quite a strong assumption, so that one of the main hypotheses
of the classical formulation of Wigner --- preservation of the transition probabilities ---
can be skipped in the linear setting.

Recently, an interesting partial generalization of the linear version of Wigner's theorem
has been obtained~\cite{Sarbicki}. It stems from a very natural problem: Given a linear map acting in the
space of (finite-dimensional) hermitian matrices, which preserves the set of projections of a fixed rank,
for what values of this rank one can conclude that the map is a symmetry transformation?
Introducing a notion of \emph{symmetry witness} --- a suitable subset of the space of hermitian matrices or,
more generally, of the space of selfadjoint trace class operators --- the given problem
can be reformulated as follows: \emph{When the set of projections of a fixed rank is a symmetry witness?}
By the linear version of Wigner's theorem (see Theorem~\ref{linwig} \emph{infra})
the set of pure states --- the rank-one projections --- is a symmetry witness.

Methods developed in~\cite{Sarbicki} provide a complete solution of the problem
in the case where the Hilbert space dimension is a prime number, but with some further insights
that turn out to be essential for the solution of the general problem; solution that is
the main result of our paper.

With the previously outlined picture in mind, we will consider suitable linear operators
in the Banach space of selfadjoint trace class operators that map the set of
projections of a given rank onto itself. Wigner's theorem deals with
the most relevant case: the rank-one projections are preserved. We do not put restrictions on the dimension of
the carrier Hilbert space (we just neglect the trivial one-dimensional case),
and we do not assume that our maps be positive and/or trace-preserving; thus we are
in the same setting of the linear version of Wigner's theorem.
We will prove that in the infinite-dimensional case preservation of the projections
of an arbitrarily fixed rank leads to a symmetry transformation. In the finite-dimensional
case, the only natural constraint --- the relevant value of the rank is smaller than
the Hilbert space dimension $n$ --- is assumed if $n$ is \emph{odd} or $n=2$; for $n\ge 4$ and \emph{even}, the value
$n/2$ must be also discarded. A simple counter-example shows why. This result gives
a complete classifications of the sets of projections of a fixed rank that are symmetry
witnesses. Reasoning in terms of quantum states, this result amounts to providing a
complete classification of the sets of `uniform' density operators of a fixed (finite) rank
that are symmetry witnesses.

The paper is organized as follows. In sect.~\ref{background}, we will fix the notations
and then outline the conceptual framework underlying our work, establishing some basic facts
and providing a formal definition of a symmetry witness. Some useful technical facts will be proved in sect.~\ref{interlude}.
A reader who is not interested in these technicalities may go directly to sect.~\ref{main},
where we will derive our main results. Finally, in sect.~\ref{conclusions}, conclusions will be drawn.

%%%------------------------------------------------------------------------------
\section{Background and notations}
\label{background}
%%%------------------------------------------------------------------------------

In this section, we will fix the main assumptions and notations, and we will briefly describe
the general conceptual background of our work. In particular, we will introduce a
notion of \emph{symmetry witness}.

In the following, we will consider a separable complex Hilbert space $\hh$, with $\dim(\hh)\ge 2$
(apart from neglecting the trivial case $\dim(\hh)=1$, we do not put restrictions on $\dim(\hh)$),
and the real Banach space $\trcsa$ of selfadjoint trace class operators in $\hh$.
The convex body of density operators in $\hh$  --- the unit trace, positive trace class operators ---
will be denoted by $\stah\subset\trcsa$. The set of all (selfadjoint) projections of rank $k\in\nat$ will be denoted
by $\prk$, and hence, in particular, $\pru\subset\stah$ will indicate the set of \emph{pure} states. The symbol
$\frsa$ will denote the (dense) linear manifold of all selfadjoint operators of finite rank;
i.e., $\frsa\defi\spanr(\pru)$.

The (unique) canonical spectral decomposition of an operator in $\trcsa$, which converges
in the trace norm, will often be regarded as a (in general, non-unique) real linear combination
of mutually orthogonal rank-one projections, again converging in the trace norm. Clearly, this
just amounts to decomposing the spectral projections of a selfadjoint trace class operator into
mutually orthogonal minimal projections.

We will also consider the sets
\begin{equation}
\poco\defi\{A\in\trc\colon A\ge 0, \ A\neq 0\}\fin , \ \ \neco\defi-\poco\fin ,
\end{equation}
\begin{equation}
\sico\defi\poco\cup\neco\fin ,
\end{equation}
and we will denote by $\prom$ the natural projection map of $\sico$ onto $\stah$:
\begin{equation}
\prom\colon\sico\ni A \mapsto \tr(|A|)^{-1}\tre |A|\in\stah \fin .
\end{equation}
The set $\stauk\defi\prom(\prk)$ consists of all \emph{uniform} density operators
of rank $k$; namely, of all mixed states whose eigenvalues form a uniform (i.e., constant)
probability distribution; in particular, $\stauu\equiv\pru$ and, if $n=\dim(\hh)<\infty$,
$\staun=\{\mms\}$, where $\mms\defi n^{-1}\tre \id$ is the maximally mixed state.

In the finite-dimensional case, clearly $\frsa=\trcsa=\bousa$ (regarded as linear spaces),
and we find it convenient to use the `neutral' symbol $\bousa$. In this case, given a
projection $P$, we will denote by $\pper$ the projection onto the orthogonal complement
of the subspace $\ran(P)$ of $\hh$; i.e., $P\mapsto\pper\defi\id - P$ is the orthocomplementation map.

The celebrated theorem of Wigner, in its elegant original formulation, states that
a bijective map of $\pru$ onto itself, preserving the transition probabilities
--- $\tr(P\tre Q)$, $P,Q\in\pru$ ---
has the well known canonical form of a symmetry transformation associated with
an operator in $\hh$, which is either unitary or
antiunitary~\cite{Wigner,Bargmann,Simon,Ludwig,Cassinelli,Cassinelli-book},
and unique modulo phase factors.

As previously mentioned, the theorem has been reformulated giving rise to several
alternative versions, according to various possible purposes and applications.
We will consider here a \emph{linear} --- namely, involving linear operators
in the Banach space $\trcsa$ --- version of the theorem.

\begin{theorem}[Wigner] \label{linwig}
Let $\lima$ be a linear operator in $\trcsa$, defined on the dense domain $\dom(\lima)=\frsa=\spanr(\pru)$,
and mapping the set $\pru$ of pure states onto itself.
Then, $\lima$ is closable, and its closure $\limacl$ is a surjective isometry of the form
\begin{equation} \label{canfor}
\limacl(A)= U A \sei U^\ast, \ \ \ \forall\cinque A\in\trcsa \fin ,
\end{equation}
where $U$ is a unitary or antiunitary operator in $\hh$, uniquely defined up to
multiplication by a phase factor.
\end{theorem}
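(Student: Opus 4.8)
The plan is to reduce the assertion to the classical (nonlinear) form of Wigner's theorem recalled above: I will show that the restriction $\lima|_{\pru}$ is a bijection of $\pru$ onto itself that preserves the transition probabilities $\tr(P\tre Q)$, after which the canonical form and the uniqueness of $U$ are immediate, and the analytic claims follow from the explicit expression $A\mapsto U\tre A\tre U^\ast$. Two facts are recorded first. Since $\tr(\lima(P))=\tr(P)=1$ for $P\in\pru$, writing $A=\sum_i c_i\tre P_i\in\frsa$ with $P_i\in\pru$ gives $\tr(\lima(A))=\sum_i c_i=\tr(A)$, so $\lima$ preserves the trace on its domain. Moreover, the spectral decomposition realizes a finite-rank density operator as a convex combination of elements of $\pru$, which $\lima$ sends to a convex combination of rank-one projections; hence $\lima$ maps the finite-rank part of $\stah$ into itself and the cone of finite-rank positive operators into itself. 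Because $\lima(\pru)=\pru$ spans $\frsa$, the operator $\lima$ is surjective onto $\frsa$; in the finite-dimensional case $\frsa=\trcsa$ is finite-dimensional, so there $\lima$ is a linear \emph{bijection} whose inverse again preserves $\pru$ and the positive cone, i.e.\ $\lima$ is an order isomorphism.

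In the finite-dimensional case I would now argue as follows. The maximally mixed state $\mms=n^{-1}\id$ is the centroid of the convex body $\stah$ (the only unitarily invariant state), hence it is fixed by the affine bijection $\lima$; thus $\lima(\id)=\id$. Being a unital order isomorphism, $\lima$ leaves invariant the quantity $\lambda_{\max}(B)=\min\{t>0\colon B\le t\tre\id\}$, so that $\lambda_{\max}(\lima(B))=\lambda_{\max}(B)$ for every positive $B$. Applying this to $B=P+Q$ with $P,Q\in\pru$, and using $\lambda_{\max}(P+Q)=1+\sqrt{\tr(P\tre Q)}$, yields $\tr(\lima(P)\tre\lima(Q))=\tr(P\tre Q)$. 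Hence $\lima|_{\pru}$ preserves all transition probabilities and is in particular injective, so it is a bijection of $\pru$ of the kind covered by Wigner's theorem.

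The heart of the matter, and the only place where the (possibly infinite) dimension intervenes, is this preservation of transition probabilities; in the infinite-dimensional case neither $\id$ nor the centroid is available and surjectivity of $\lima$ does not force bijectivity, so a dimension-free argument is needed. For it I would use the following characterization of orthogonality. If $P\perp Q$, their ranges span a two-dimensional subspace $\kk$ on which $P+Q$ acts as the identity, whence $P+Q=R+(\id_{\kk}-R)$ for \emph{every} rank-one projection $R$ supported on $\kk$; applying $\lima$ shows that $S\defi\lima(P)+\lima(Q)$ admits a two-parameter family of decompositions into two rank-one projections. A short computation with $2\times 2$ matrices shows that a sum of two rank-one projections admits more than a one-parameter family of such decompositions precisely when its two nonzero eigenvalues coincide, i.e.\ when it is a rank-two projection; therefore $\lima(P)\perp\lima(Q)$. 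The reverse implication follows from surjectivity of $\lima|_{\pru}$, so orthogonality is preserved in both directions. For $\dim(\hh)\ge 3$ this already gives the canonical form through Uhlhorn's theorem; the residual case $\dim(\hh)=2$ is settled directly, since there $\pru$ is a Bloch sphere and a trace-preserving linear bijection carrying it onto itself induces an orthogonal transformation of the traceless part $\cong\erre^3$ (the center of the sphere being forced to the origin). The technical obstacle I expect is exactly controlling this step without injectivity in infinite dimensions, namely ensuring that the exhibited family of decompositions is genuinely two-parameter; I would address it by first establishing injectivity of $\lima$ on each two-dimensional block.

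Finally, with $\lima(P)=U\tre P\tre U^\ast$ for all $P\in\pru$, where $U$ is unitary or antiunitary and unique up to a phase by the classical theorem, linearity extends this to $\lima(A)=U\tre A\tre U^\ast$ for every $A\in\frsa$. The map $A\mapsto U\tre A\tre U^\ast$ is a surjective isometry of $\trcsa$ for the trace norm, since $\|U\tre A\tre U^\ast\|_1=\|A\|_1$; it agrees with $\lima$ on the dense domain $\frsa$, so $\lima$ is bounded there and hence closable, and its closure $\limacl$ is this isometry on all of $\trcsa$, which establishes \eqref{canfor} together with the surjectivity and the uniqueness of $U$.
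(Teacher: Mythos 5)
Your finite-dimensional argument is correct, and it is a genuinely different route from the paper's: the paper never recovers transition probabilities at all; instead it shows that $\lima$ is positive, trace-preserving and bounded, extends it to all of $\trcsa$, and then invokes Davies' structure theorem for surjective \emph{pure} positive maps to get the form $A\mapsto UAU^\ast$ directly. Your centroid/order-isomorphism argument --- fixing $\id$ via the barycenter of $\stah$, then preserving $\lambda_{\max}$ and using $\lambda_{\max}(P+Q)=1+\sqrt{\tr(PQ)}$ --- cleanly reduces the finite-dimensional case to the classical (nonlinear) Wigner theorem, and the two-dimensional Bloch-sphere case is also fine. What Davies' theorem buys the paper, and what your route must replace, is precisely a treatment of the infinite-dimensional case that does not presuppose injectivity.

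And that is where your proof has a genuine gap --- the very one you flag yourself and do not close. The two-parameter-family argument for orthogonality preservation requires the map $R\mapsto\lima(R)$ to be injective (or at least non-degenerate) on the rank-one projections supported on the two-dimensional subspace $\kk=\ran(P)\oplus\ran(Q)$, and this is not a routine verification that can be deferred: the theorem does not assume $\lima$ injective, and blockwise injectivity cannot be extracted from local information, because the ``collapsed'' map $A\mapsto\tr(A)\,T$ (for a fixed $T\in\pru$) is a perfectly good linear, positive, trace-preserving map of a $2\times 2$ block sending \emph{every} rank-one projection of the block to the single projection $T$; it is consistent with every property of $\lima$ you have established (linearity, positivity, trace preservation, $\lima(\pru)\subset\pru$). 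In other words, if some block collapses, your family of decompositions of $\lima(P)+\lima(Q)$ degenerates to a point and the counting argument says nothing. Ruling collapse out requires using the global hypothesis $\lima(\pru)=\pru$ (surjectivity) in an essential way --- e.g., showing that a collapsed block either propagates to global collapse, contradicting surjectivity, or forces entire higher-dimensional blocks to have two-dimensional range, and turning that into a contradiction is substantial work, essentially the whole difficulty of the infinite-dimensional case. A secondary under-justified step: ``the reverse implication follows from surjectivity of $\lima|_{\pru}$'' is too quick; the both-directions hypothesis needed for an Uhlhorn/\v{S}emrl-type theorem requires first showing that forward orthogonality preservation forces $\lima$ to be injective on all of $\frsa$ (so that $\lima^{-1}$ exists as a linear map and the forward argument can be run for it) --- surjectivity of the restriction to $\pru$ alone does not yield it.
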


\begin{proof}
By the spectral decomposition of a selfadjoint finite rank operator
(regarding each spectral projection as a sum of mutually orthogonal rank-one projections),
and exploiting the fact that $\pru$ is mapped by $\lima$ into itself,
one finds out that $\lima$ is positive, trace-preserving and bounded on its dense domain.
Hence, $\lima$ is closable and its closure $\limacl$ is a bounded operator defined on
$\dom(\limacl)=\trcsa$. Notice that, by the spectral decomposition of a selfadjoint trace class operator,
which converges in the trace norm, one concludes as above that $\limacl$ is a trace-preserving and positive too.
Moreover, the spectral decomposition also implies that $\ran(\limacl)=\trcsa$,
because $\limacl$ is bounded and $\limacl(\pru)=\pru$.

The fact that $\limacl$ is a surjective positive linear map,
mapping the set of all \emph{pure elements} of the positive cone $\poscon$
of $\trcsa$ (i.e., the rank-one positive operators~\cite{Davies-book})
into itself --- namely, a surjective \emph{pure} positive map ---
entails, by a classical result of Davies (Theorem~{3.1} of~\cite{Davies};
also see Theorem~{3.1}, in chapt.~{2} of~\cite{Davies-book}),
that $\limacl$ is of the form $A\mapsto U A \sei U^\ast$,
where $U$ is a bounded linear or antilinear operator in $\hh$. This
operator must be, in particular, either unitary or antiunitary.
Indeed, by assumption $\limacl(\pru)=(\pru)$, and it is
easy to see that this condition implies that $U$ maps the unit sphere of $\hh$ onto itself.
(One arrives to the same conclusion exploiting the fact that $\limacl$ is trace-preserving
and surjective.) Clearly, the essential uniqueness of the unitary or antiunitary operator
$U$ in~{(\ref{canfor})} is guaranteed by the standard version of Wigner's theorem.
\end{proof}

Still another version of Wigner's theorem can be considered the following result.

\begin{theorem} \label{linani}
Let $\lima$ be a linear operator in $\trcsa$, defined on $\domlima=\frsa$,
and mapping the convex set $\frstah\defi\frsa\cap\stah$ bijectively onto itself.
Then, $\lima$ is closable, and its closure $\limacl$ is a surjective isometry
of the canonical form~{(\ref{canfor})}; namely, a symmetry transformation.
\end{theorem}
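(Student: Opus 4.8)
The plan is to reduce Theorem~\ref{linani} to the linear version of Wigner's theorem, Theorem~\ref{linwig}, by showing that the stated hypotheses force $\lima$ to map the set $\pru$ of pure states \emph{onto} itself. The bridge between the two results is the observation that $\pru$ is exactly the set of extreme points of the convex body $\frstah$, together with the elementary fact that an affine bijection of a convex set maps its extreme points bijectively onto its extreme points.

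First I would record that $\frstah=\frsa\cap\stah$ is convex --- a convex combination of two finite-rank density operators is again a finite-rank density operator --- and identify its extreme points with $\pru$. By the spectral decomposition, any $\st\in\frstah$ of rank $k\ge 2$ may be written as $\st=\sum_{i=1}^{k}\lambda_i\tre P_i$ with $P_i\in\pru$ mutually orthogonal, $\lambda_i>0$ and $\sum_i\lambda_i=1$; this is a nontrivial convex combination within $\frstah$, so no operator of rank $\ge 2$ is extreme. Conversely, a rank-one projection $P$ is an extreme point of the whole state space $\stah$, hence of $\frstah$, by the standard argument that $P=\lambda\tre\st+(1-\lambda)\tre\sigma$ with $0<\lambda<1$ and $\st,\sigma\in\stah$ forces $\st=\sigma=P$. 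Thus the extreme points of $\frstah$ are precisely the elements of $\pru$.

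Next I would bring in linearity. Since $\lima$ is linear on $\frsa\supset\frstah$ and maps $\frstah$ bijectively onto itself, its restriction to $\frstah$ is an affine bijection of this convex set onto itself: linearity sends convex combinations to convex combinations, which remain in $\frstah$ by hypothesis. As the inverse of an affine bijection is again affine, the map preserves extreme points in both directions, and together with the previous paragraph this yields $\lima(\pru)=\pru$. At this point $\lima$ satisfies verbatim the hypotheses of Theorem~\ref{linwig}, which delivers the closability of $\lima$ and the canonical form~(\ref{canfor}) for its closure $\limacl$, concluding the argument.

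The proof is essentially a packaging of Krein--Milman-type reasoning, so I do not anticipate a serious obstacle; the only point that warrants care is the extreme-point characterization of $\frstah$, and in particular the verification that it persists in the infinite-dimensional setting, where the only finite-rank extreme points of $\stah$ are still the rank-one projections. Once that is secured, the reduction to Theorem~\ref{linwig} is immediate.
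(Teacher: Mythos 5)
Your proof is correct, but it extracts more from the convexity argument than the paper does, and as a result follows a genuinely cleaner route in its second half. Both proofs hinge on the same core idea: combine the bijectivity of $\lima$ on $\frstah$ with the extremality of rank-one projections. The paper, however, only establishes the one-sided inclusion $\lima(\pru)\subset\pru$ --- given $P\in\pru$ it shows that $\lima(P)$ is extreme in $\staj$, for $\jj$ a finite-dimensional subspace containing $\ran(\lima(P))$, by pulling a hypothetical convex decomposition back through $\lima^{-1}$ and using extremality of $P$ in $\stah$. Since ``into'' does not meet the hypothesis $\lima(\pru)=\pru$ of Theorem~\ref{linwig}, the paper cannot cite that theorem as a black box; instead it re-runs its machinery: positivity, trace-preservation and boundedness on the dense domain, closability, Davies' theorem on pure positive maps, and finally a density-of-range argument to upgrade the resulting isometry $U$ to a unitary or antiunitary operator. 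You instead observe that the restriction of $\lima$ to $\frstah$ is an affine bijection whose inverse is automatically affine (by convexity of $\frstah$, linearity of $\lima$, and uniqueness of preimages), so extreme points are preserved in \emph{both} directions; together with your identification of the extreme points of $\frstah$ as exactly $\pru$ (rank $\ge 2$ states fail extremality via spectral decomposition, rank-one projections inherit extremality from $\stah$), this yields the full equality $\lima(\pru)=\pru$, after which Theorem~\ref{linwig} applies verbatim. Your version is more modular and avoids repeating the Davies-type analysis; the only point requiring the care you already gave it is the global characterization of the extreme points of $\frstah$ in the infinite-dimensional setting, whereas the paper needs extremality only within $\staj$ and $\stah$. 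Both arguments are sound.
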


\begin{proof}
We first note that
\begin{equation}
\ran(\lima)=\lima(\spanr(\frstah)=\spanr(\lima(\frstah))
=\spanr(\frstah)=\frsa \fin .
\end{equation}
Observe, moreover, that $\lima$ maps $\pru$ into itself. In fact, let
$P$ be a rank-one projection in $\hh$, and let $\jj$ a finite-dimensional subspace
of $\hh$ such that $\ran(\lima(P))\subset\jj$. Clearly, $\lima(P)$ is an extreme
point of the convex set $\staj\subset\frstah$, because $\lima(P)=\alpha\tre S + (1-\alpha)\tre T$,
with $0<\alpha<1$ and $S,T\in\staj$, implies that
\begin{equation}
P = \alpha\sei \lima^{-1}(S) + (1-\alpha)\tre \lima^{-1}(T)\fin ,
\end{equation}
where --- taking into account that $\lima$ maps $\frstah$ bijectively onto itself ---
with a slight abuse we have denoted by $\lima^{-1}(S)$, $\lima^{-1}(T)$
the unique density operators in the sets $\lima^{-1}(\{S\})$ and  $\lima^{-1}(\{T\})$, respectively.
Hence, $\lima^{-1}(S)=\lima^{-1}(T)=P$ and $\lima(P)=S=T$ is a pure state, as claimed.

Since $\lima(\pru)\subset\pru$, as in the proof of Theorem~\ref{linwig} we argue that $\lima$ is positive,
trace-preserving and bounded on its dense domain. Hence, $\lima$ is closable and its closure $\limacl$
is a trace-preserving, positive bounded operator defined on $\dom(\limacl)=\trcsa$. Then, $\limacl$
is a \emph{pure} positive map whose dense range contains $\frsa$ so that --- by Theorem~{3.1} of~\cite{Davies} ---
for every $A\in\trcsa$,  $\limacl(A)= U A \sei U^\ast$, where $U$ is a bounded linear or
antilinear operator in $\hh$. Finally, the fact that $\limacl$ is trace-preserving entails that $U$ is
a (linear or antilinear) isometry: $U^\ast\tre U=\id$; but actually a unitary or antiunitary operator,
because $\ran(\limacl)\supset\frsa$ is dense in $\trcsa$.
\end{proof}

\begin{remark} \label{linkad}
The previous result is closely related to the following fact.
Let $\lima$ be a linear operator --- with $\domlima=\trcsa=\spanr(\stah)$
(every element of $\trcsa$ can be expressed as a linear combination of two density operators) ---
mapping the convex set $\stah$ of all states bijectively onto itself. The restriction of $\lima$ to $\stah$
(which preserves the convex structure) is a Kadison automorphism~\cite{Kadison,Simon},
and then it acts as a symmetry transformation on density operators.
It follows that $\lima$ is bounded and acts in the same way on the whole space $\trcsa=\spanr(\stah)$.
\end{remark}

Recently, it has been proved the following interesting fact~\cite{Sarbicki}, which partially generalizes
Wigner's theorem.

\begin{theorem}[Sarbicki \emph{et al.}] \label{linfin}
For $n=\dim(\hh)<\infty$ prime number, let $\lima\colon\bousa\rightarrow\bousa$,
$\bousa\equiv\trcsa$, be a linear map mapping $\prk$ --- with $k<n$ ---  bijectively onto itself.
Then, $\lima$ is a surjective isometry of the form
\begin{equation} \label{fincanfor}
\lima(A)= U A \sei U^\ast, \ \ \ \forall\cinque A\in\bousa \fin ,
\end{equation}
where $U$ is a unitary or antiunitary operator in $\hh$.
\end{theorem}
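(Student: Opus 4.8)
The plan is to reduce the rank-$k$ statement to the rank-one case, where Theorem~\ref{linwig} applies, by transporting the bijection $\lima$ down a ``ladder'' of ranks. First I dispose of the degenerate case: since $n$ is prime and $k<n$, one has $\gcd(k,n)=1$, and if $n=2$ then necessarily $k=1$, so $\prk=\pru$ and the claim is exactly Theorem~\ref{linwig}. I henceforth assume $n$ odd (so $2k\neq n$ and $k\neq n-k$). A preliminary observation is that $\spanr(\prk)=\bousa$ for every $1\le k\le n-1$: differences of rank-$k$ ``coordinate'' projections that swap a single basis vector produce all operators $|e\rangle\langle e|-|f\rangle\langle f|$, which together with one projection of trace $k$ span $\bousa$. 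Hence $\lima$, being linear and carrying the spanning set $\prk$ onto itself, is a linear automorphism of $\bousa$; moreover $\tr(\lima(P))=k=\tr(P)$ for $P\in\prk$ forces $\tr\circ\lima=\tr$, so $\lima$ is trace preserving.

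Next I would establish two structural facts. \textbf{(i)} $\lima(\id)=\id$. Here I would use the variational identity $\max_{P\in\prk}\tr(AP)=\sum_{i\le k}\lambda_i^{\downarrow}(A)$ (the sum of the $k$ largest eigenvalues of $A$) together with its dual minimum. Since $\lima$ permutes $\prk$, the adjoint $\lima^{\ast}$ with respect to the trace form preserves the sum of the $k$ largest, and of the $k$ smallest, eigenvalues of every $A$; a short argument, using that $\lima^{\ast}$ is invertible, then pins $\lima$ to the identity on the line $\erre\,\id$, i.e.\ $\lima(\id)=\id$. \textbf{(ii)} $\lima$ preserves orthogonality: for $P,Q\in\prk$ one has $PQ=0\iff\lima(P)\tre\lima(Q)=0$. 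Since for projections $PQ=0$ is equivalent to $P+Q$ being again a projection, (ii) is \emph{equivalent} to the assertion that $\lima$ carries every rank-$2k$ projection in $\prdk$ to a projection. \emph{This is the crux and the main obstacle}: it amounts to recovering the incidence (adjacency) structure of the Grassmannian of $k$-planes from the bare linear action of $\lima$, and it is precisely here that primality enters --- coprimality of $k$ and $n$ guarantees that the embedding $\prk\hookrightarrow\bousa$ is nondegenerate enough that a linear automorphism preserving $\prk$ as a set must preserve this incidence relation (a Chow-type rigidity).

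Granting (i) and (ii), the reduction to rank one is clean. Orthocomplementation $P\mapsto\pper=\id-P$ together with (i) gives $\lima(\pper)=\lima(P)^{\perp}$, so $\lima$ bijects $\mathscr{P}_{n-k}(\hh)$ as well; and if $\lima$ bijects two ranks whose orthogonal sum is admissible, then (ii) shows it bijects that sum, because a rank-$(a+b)$ projection splits as an orthogonal sum of a rank-$a$ and a rank-$b$ projection and orthogonality is preserved along the ladder. Thus the set $R\subseteq\{1,\dots,n-1\}$ of ranks that $\lima$ bijects (with orthogonality preserved) contains $k$ and is closed under $a\mapsto n-a$ and under admissible orthogonal addition. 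An elementary number-theoretic lemma shows that, because $\gcd(k,n)=1$, this forces $R=\{1,\dots,n-1\}$: choosing $j$ with $jk\equiv 1\pmod n$ and realising the reduction by a sequence of additions of copies of $k$ and complementations reaches rank $1$. In particular $\lima$ maps $\pru$ bijectively onto itself.

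Finally I invoke Theorem~\ref{linwig}: $\lima$ is a linear operator on $\bousa=\frsa$ carrying the pure states $\pru$ onto themselves, whence $\lima(A)=U A\,U^{\ast}$ for all $A\in\bousa$, with $U$ unitary or antiunitary and unique up to a phase by the standard form of Wigner's theorem; being of this form, $\lima$ is a surjective isometry, which is the asserted canonical form~(\ref{fincanfor}). To summarise, the routine parts are the spanning/trace normalisation and the ladder of ranks, while the genuine difficulty is step~(ii) --- the orthogonality, equivalently idempotency, preservation --- whose proof must exploit the primality of $n$.
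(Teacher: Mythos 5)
Your proposal reduces Theorem~\ref{linfin} to its single hardest ingredient and then leaves that ingredient unproved. Step (ii) --- orthogonality preservation, equivalently that $\lima$ carries every element of $\prdk$ to a projection --- is exactly the technical core (it is Proposition~1 of the cited Sarbicki \emph{et al.} paper, generalized here as Proposition~\ref{lemprea}), and you offer no argument for it beyond an appeal to a ``Chow-type rigidity'' in which primality of $n$ supposedly enters. That is both a missing proof and a misdiagnosis. Orthogonality preservation has nothing to do with primality: Proposition~\ref{lemprea} proves it for \emph{every} $k$ with $2k\le\dim(\hh)$, with no arithmetic condition relating $k$ and $n$, by elementary linear algebra --- one shows $\ran(\lima(P)),\ran(\lima(Q))\subset\ran(\lima(\pkk))$ for $\pkk=P+Q$, a dimension count forces $\lima$ to restrict to a linear bijection of $\bkk$ onto $\bimkk$ with $\dim(\imkk)=2k$, and the decomposition $\lima(\pkk)=\bp+\bq+2R$ into mutually orthogonal projections forces $R=0$ by a rank count. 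Indeed, at $2k=n$ orthogonality preservation still holds, yet the map $\linp(A)=k^{-1}\tr(A)\,\id-A$ is a counterexample to the theorem's conclusion; so the arithmetic hypothesis cannot be doing its work inside step (ii) --- it acts only downstream, in the passage from orthogonality preservation to the canonical form (in your scheme, in the rank ladder; in the paper's scheme, through the hypothesis $2k<\dim(\hh)$ of \v{S}emrl's theorem, Theorem~\ref{teopre}, together with the reduction Lemma~\ref{teclem} for $k>n/2$). Without an actual proof of (ii) there is no proof.

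Two further soft spots. First, your step (i) as sketched establishes the wrong identity: preservation of $\max_{P\in\prk}\tr(AP)$ says $S_k\circ\lima^\ast=S_k$ (with $S_k$ the sum of the $k$ largest eigenvalues), and feeding in $A=\id$ pins down $\lima^\ast(\id)=\id$, which is just trace preservation of $\lima$ --- already known --- not unitality $\lima(\id)=\id$; you have conflated $\lima$ with $\lima^\ast$. (A correct short argument exists: $\lima$ is an invertible linear map of $\bousa$ carrying the compact convex set $\mathrm{conv}(\prk)$ onto itself, hence it fixes its centroid, which by unitary invariance equals $(k/n)\,\id$.) Second, the ladder in step~5 needs more than (ii): adding ranks requires \emph{cross-rank} orthogonality preservation between $\mathscr{P}_a(\hh)$ and $\prk$, and when $2k>n$ (possible for prime $n$, e.g.\ $n=5$, $k=3$) the set $\prdk$ is empty, (ii) is vacuous, and the ladder cannot start before complementing --- which already presupposes unitality. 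These two points are repairable with bookkeeping, and indeed your overall architecture (ladder down to rank one, then Theorem~\ref{linwig}) is essentially the original route of Sarbicki \emph{et al.}, whereas the present paper proves the stronger Theorem~\ref{mainth} without ever descending to rank one, by applying \v{S}emrl's rigidity theorem directly at rank $k$ after Proposition~\ref{probodir}. But the crux you flagged, step (ii), must be proved, not postulated.
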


Inspired by the previous results, and considering the general case of a possibly
infinite-dimensional Hilbert space, we introduce the following notion.

\begin{definition}[Symmetry witnesses] \label{witness}
A \emph{symmetry witness candidate} is a  set $\syw\subset\trcsa$, invariant with respect to
every unitary or antiunitary transformation, such that its (real) linear span --- $\spanr(\syw)$ ---
is dense in $\trcsa$; i.e., such that $\spancl(\syw)=\trcsa$.
A symmetry witness candidate $\syw$ is called a \emph{symmetry witness} if every injective linear operator
$\lima$ in $\trcsa$ --- defined on $\domlima=\spanr(\syw)$, and mapping $\syw$ onto itself: $\lima(\syw)=\syw$
--- is of the canonical form $\lima(A)= U A \sei U^\ast$, $\forall\cinque A\in\spanr(\syw)$,
for some unitary or antiunitary operator $U$ in $\hh$.
\end{definition}

\begin{remark} \label{remwitness}
Let $\syw$ be a symmetry witness candidate and $\lima$ a linear operator
in $\trcsa$, defined on $\domlima=\spanr(\syw)$, and mapping $\syw$ onto itself.
Then, exploiting the fact that $\lima(\syw)=\syw$, it is straightforward to conclude
that $\lima$ maps $\spanr(\syw)$ onto itself. Therefore, one can equivalently
characterize a symmetry witness $\syw$ as a symmetry witness candidate such that every linear operator
$\lima$ in $\trcsa$, with $\domlima=\spanr(\syw)$, mapping $\spanr(\syw)$ bijectively onto
itself and leaving $\syw$ invariant, $\lima(\syw)=\syw$, has the canonical form
of a unitary or antiunitary transformation.
\end{remark}

\begin{remark}
Given a symmetry witness $\syw$ and an injective linear operator $\lima$, with $\domlima=\spanr(\syw)$,
which leaves $\syw$ invariant, $\lima$ is closable and its closure $\limacl$ is a
surjective isometry of the form~{(\ref{canfor})}; i.e., a symmetry transformation in $\trcsa$.
More generally, given a closed injective linear operator in $\trcsa$ whose domain contains $\spanr(\syw)$,
one can ascertain whether it is a symmetry transformation by checking whether it leaves $\syw$ invariant.
\end{remark}

By Theorem~\ref{linwig}, the set of pure states $\pru$ is a symmetry witness.
By Theorem~\ref{linani} and by Remark~\ref{linkad}, the convex body $\stah$ of all states
and the convex set $\frstah\subset\stah$ of finite-rank density operators are symmetry witnesses as well.
Theorem~\ref{linfin} provides a partial classification of the sets of projections of a fixed rank that are
symmetry witnesses. These examples suggest a further characterization of symmetry witnesses.
We will say that a symmetry witness (candidate) $\syw$ is \emph{positive}, \emph{negative} or \emph{signed}
if it is contained in $\poco$, $\neco$ or $\sico$, respectively. Clearly, if $\syw$ is a \emph{signed}
symmetry witness candidate, then its image $\psyw$ through $\prom$ ---
$\psyw\defi\prom(\syw)\subset\stah$ --- is a symmetry witness candidate too: if $A\in\syw\subset\sico$,
then $U A \sei U^\ast\in\syw$, for any unitary or antiunitary operator in $\hh$, so that
\begin{equation}
U \tre \prom(A) \sei U^\ast = \tr(|A|)^{-1}\tre U \tre |A| \sei U^\ast =
\tr(|U A \sei U^\ast|)^{-1} \tre |U A \sei U^\ast|=\prom(U A \sei U^\ast)\in\psyw \fin .
\end{equation}

\begin{definition}[Projectable symmetry witnesses] \label{project}
We will say that a signed symmetry witness is \emph{projectable} if its image through $\prom$ is a symmetry
witness too.
\end{definition}

\begin{remark} \label{projcase}
Obviously, every positive symmetry witness whose elements have a fixed value of the trace
is projectable. E.g., if $\prk$ is a symmetry witness, then it is projectable; i.e.,
$\stauk$ is a symmetry witness too.
\end{remark}

St{\o}rmer~\cite{Stormer-new} has very recently proved another interesting theorem
of the `Wigner type' which involves unital positive maps acting in the complex
Banach space $\bou$ of bounded operators in $\hh$. Restricting to
selfadjoint operators --- i.e., to the real Banach space $\bousa$ ---
we get the following equivalent formulation.

\begin{theorem}[St{\o}rmer] \label{boucase}
Let $\map\colon\bousa\rightarrow\bousa$ be a unital positive linear map
mapping $\prk$, $k<\dim(\hh)$, onto itself. Then, $\map$ is of the form
\begin{equation}
\map(A)= U A \sei U^\ast, \ \ \ \forall\cinque A\in\bousa \fin ,
\end{equation}
where $U$ is a unitary or antiunitary operator in $\hh$.
\end{theorem}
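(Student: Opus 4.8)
The plan is to deduce the statement from St{\o}rmer's original theorem in~\cite{Stormer-new}, which concerns unital positive linear maps on the \emph{complex} Banach space $\bou$, by exhibiting the announced equivalence of the two formulations. First I would extend $\map$ to a complex-linear map $\map_{\mathbb{C}}\colon\bou\rightarrow\bou$ by setting $\map_{\mathbb{C}}(A)\defi\map\big(\tfrac{1}{2}(A+A^\ast)\big)+\ima\,\map\big(\tfrac{1}{2\ima}(A-A^\ast)\big)$, the unique complex-linear extension of $\map$ whose restriction to $\bousa$ is $\map$. One checks immediately that $\map_{\mathbb{C}}$ is unital, since $\map_{\mathbb{C}}(\id)=\map(\id)=\id$; that it is Hermiticity-preserving by construction; and that it is positive, because a positive element of $\bou$ is selfadjoint and nonnegative, hence already lies in the domain on which $\map_{\mathbb{C}}$ agrees with the positive map $\map$. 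Since $\prk\subset\bousa$ and $\map_{\mathbb{C}}|_{\bousa}=\map$, the map $\map_{\mathbb{C}}$ sends $\prk$ onto itself. St{\o}rmer's theorem then yields $\map_{\mathbb{C}}(A)=U A\,U^\ast$ for a unitary or antiunitary $U$, and restricting to $\bousa$ gives the claim. The converse implication --- that any unital positive map on $\bou$ is determined by its selfadjoint part --- confirms that the two formulations are genuinely equivalent.

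It is instructive to indicate how one might instead argue within the framework of this paper, reducing to Theorem~\ref{linwig} (equivalently, to the result of Davies~\cite{Davies} invoked there). Because $1\le k\le n-1$, the set $\prk$ spans $\bousa$, and since $\tr(\map(P))=k=\tr(P)$ for every $P\in\prk$, the map $\map$ is trace-preserving. Unitality together with positivity makes $\map$ order-preserving, and this already forces $\map$ to respect the orthogonality of rank-$k$ projections: if $P,Q\in\prk$ satisfy $PQ=0$, then $P\le\id-Q$, whence $\map(P)\le\id-\map(Q)$, i.e.\ $\map(P)+\map(Q)\le\id$; two rank-$k$ projections whose sum is dominated by $\id$ are necessarily orthogonal. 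Consequently $\map$ maps orthogonal sums of rank-$k$ projections to orthogonal sums, so it preserves $\prdk$; and, by unitality, $\map(\pper)=(\map(P))^\perp$ shows that $\map$ preserves the rank-$(n-k)$ projections as well.

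The crux --- and the step I expect to be the main obstacle --- is to promote this to the statement that $\map$ carries rank-one projections to rank-one projections, so that Theorem~\ref{linwig} (or Davies' structure theorem) applies and pins down $U$ as unitary or antiunitary. The orthogonality argument above only reaches the ranks generated additively by $k$ and closed under orthocomplementation, that is, essentially the multiples of $\gcd(k,n)$; when $\gcd(k,n)>1$ it can never isolate a rank-one projection. To bridge this gap one must analyse \emph{overlapping} (non-orthogonal) pairs of rank-$k$ projections, controlling the dimension of the intersection of their ranges and exploiting the geometry of the Grassmannian --- precisely the delicate technical input supplied, in their respective settings, by the arguments of Sarbicki \emph{et al.}~\cite{Sarbicki} and of St{\o}rmer~\cite{Stormer-new}. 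This is why basing the proof on the already-established theorem of St{\o}rmer, via the complexification above, is by far the cleaner route.
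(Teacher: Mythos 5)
Your first paragraph is precisely the paper's own (implicit) argument: the paper offers no independent proof of Theorem~\ref{boucase}, presenting it simply as the restriction to $\bousa$ of St{\o}rmer's theorem on the complex space $\bou$, which is exactly the complexification/restriction equivalence you spell out (your remaining paragraphs are sound commentary on why a self-contained proof would be delicate, but are not needed). One small correction: on $\bou$ a map of the form $A\mapsto UA\,U^\ast$ with $U$ antiunitary is conjugate-linear, so St{\o}rmer's conclusion for your complex-linear extension $\map_{\mathbb{C}}$ must read $\map_{\mathbb{C}}(A)=U A^\ast U^\ast$ in the antiunitary case --- cf.\ eqs.~(\ref{compforma}) and~(\ref{compformb}) of the paper --- which reduces to $UA\,U^\ast$ once you restrict to selfadjoint $A$, so your conclusion is unaffected.
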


A comparison with Theorem~\ref{linfin} suggests, however, that, at least in the finite-dimensional
case, the assumptions of positivity and unitality in Theorem~\ref{boucase} may
be dispensed with. We will clarify this point in sect.~\ref{main}.

A further remarkable classical version of Wigner's theorem is due to Ulhorn~\cite{Ulhorn}.
As Wigner's seminal result it concerns maps defined on pure states only,
but it relies on a slightly weaker assumption: only the \emph{vanishing} transition probabilities
are preserved (notice that, given projections $P,Q\in\trcsa$, $P\tre Q=0 \ \Leftrightarrow \ \tr(P\tre Q)=0$).
The following result, due to \v{S}emrl~\cite{Semrl}, is a generalization of Ulhorn's theorem
including maps acting on projections of a fixed rank. Recall that the hypothesis
that the dimension of the relevant Hilbert space is at least three is
a distinguishing feature of Ulhorn's theorem.

\begin{theorem}[\v{S}emrl] \label{teopre}
Let $\lima\colon\prk\rightarrow\prk$ --- with $2\le 2k < \dim(\hh)$ ---
be a bijection that preserves orthogonality in both directions, i.e.,
\begin{equation}
P\tre Q=0 \ \ \Longleftrightarrow \ \ \lima(P)\tre \lima(Q)=0 \fin, \ \ \
P,Q\in\prk \fin .
\end{equation}
Then, $\lima$ is of the form
\begin{equation}
\lima(P)= U P \sei U^\ast, \ \ \ \forall\cinque P\in\prk \fin ,
\end{equation}
where $U$ is a unitary or antiunitary operator in $\hh$.
\end{theorem}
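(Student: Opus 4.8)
The plan is to pass from projections to their ranges and to work in the Grassmannian of $k$-dimensional subspaces of $\hh$, where the orthogonality of two projections, $P\tre Q=0$, becomes orthogonality of the subspaces, $\ran(P)\perp\ran(Q)$, equivalently $\ran(Q)\subseteq\ran(P)^\perp$. The hypothesis $2k<\dim(\hh)$ is precisely what keeps this picture non-degenerate: it forces $\dim(\ran(P)^\perp)>k$, so that every rank-$k$ projection admits orthogonal partners and the orthogonality graph on $\prk$ is rich enough to carry geometric information. I would first settle the base case $k=1$, where $2k<\dim(\hh)$ reads $\dim(\hh)\ge 3$ and the statement is exactly Ulhorn's theorem: a bijection of the rays of $\hh$ preserving orthogonality in both directions is induced by a unitary or antiunitary operator, through the fundamental theorem of projective geometry.

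For $k\ge 2$ the strategy is to recover the projective-geometric structure underlying $\prk$ from the orthogonality relation alone, and then to invoke a theorem of Chow type. The reconstruction lemma I would establish is that \emph{adjacency} of two rank-$k$ projections --- $\dim(\ran(P)\cap\ran(Q))=k-1$, equivalently $\dim(\ran(P)+\ran(Q))=k+1$ --- is definable purely in terms of orthogonality. Indeed, for $P\neq Q$ the set of common orthogonal partners $\{R\in\prk\colon R\tre P=0,\ R\tre Q=0\}$ is exactly the family of $k$-subspaces of $(\ran(P)+\ran(Q))^\perp$; since such a family is inclusion-ordered in the same way as its ambient subspace, larger common orthogonal neighbourhoods correspond to smaller sums $\ran(P)+\ran(Q)$. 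Hence $P$ and $Q$ are adjacent if and only if their common orthogonal neighbourhood is inclusion-maximal among those of all distinct pairs. Being an orthogonality-preserving bijection, $\lima$ preserves these neighbourhoods and therefore preserves adjacency in both directions.

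With adjacency available, Chow's theorem on the Grassmann graph --- in finite dimension, and its Blunck--Havlicek and Faure--Fr\"olicher extensions in the infinite-dimensional setting --- provides a bijective $\sigma$-semilinear map $T$ of $\hh$ (the alternative involving the dual space being absorbed, via the inner product, into the antiunitary branch) that induces $\lima$ on subspaces: $\ran(\lima(P))=T(\ran(P))$. It then remains to turn orthogonality preservation into a constraint on $T$. Since $\langle x,y\rangle=0$ implies $\langle Tx,Ty\rangle=0$, a standard argument forces the field automorphism $\sigma$ to be the identity or complex conjugation and forces $T=\lambda\tre U$ with $U$ unitary or antiunitary; the scalar $\lambda$ acts trivially on subspaces, whence $\ran(\lima(P))=U(\ran(P))$ and therefore $\lima(P)=U P\sei U^\ast$ for every $P\in\prk$.

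I expect the reconstruction of adjacency from orthogonality to be the crux: one must verify that the maximality characterisation is not spoiled by the degenerate configurations that arise precisely when $2k\ge\dim(\hh)$, which is where that hypothesis is genuinely used. A secondary difficulty is the infinite-dimensional bookkeeping required to quote the correct form of the fundamental theorem of projective geometry, and the final normalisation step that pins $\sigma$ down to the identity or conjugation: producing a semilinear $T$ is automatic once adjacency is in hand, but converting it into an honest unitary or antiunitary operator, and discarding the dual/duality alternative that is incompatible with orthogonality, must be carried out with some care.
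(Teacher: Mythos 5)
You should know at the outset that the paper does not prove this theorem: it is stated as \v{S}emrl's result, and the paper's ``proof'' is a one-sentence remark that \v{S}emrl's argument, written for $\dim(\hh)=\infty$, also covers $2\le 2k<\dim(\hh)<\infty$. Your proposal is therefore an independent reconstruction rather than a variant of anything spelled out in the paper, and its skeleton is sound. The common-orthogonal-neighbourhood reconstruction of adjacency is correct: the neighbourhood of a pair $P\neq Q$ is the set of elements of $\prk$ with range inside $(\ran(P)+\ran(Q))^\perp$; a \emph{nonempty} neighbourhood determines $\ran(P)+\ran(Q)$, because the ranges of its members span the orthocomplement, so neighbourhood inclusion reverses inclusion of sums; and since $2k<\dim(\hh)$ guarantees that every \emph{adjacent} pair has a nonempty neighbourhood, inclusion-maximality does characterise $\dim(\ran(P)+\ran(Q))=k+1$, even though in finite dimension pairs with large sums may have empty neighbourhoods. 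The finish --- a Chow-type theorem producing a semilinear bijection, then orthogonality of vectors (obtainable from orthogonality of $k$-subspaces, again using $2k<\dim(\hh)$) forcing the field automorphism to be the identity or conjugation and the map to be a scalar multiple of a unitary or antiunitary operator --- is the standard one.

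The one genuine error is the parenthetical claim that the dual alternative in Chow's theorem is ``absorbed, via the inner product, into the antiunitary branch''. It is not: the duality case acts on subspaces through annihilators, i.e., it yields maps of the form $\ran(P)\mapsto\big(T\ran(P)\big)^\perp$ for a semilinear $T$ --- a semilinear image \emph{composed with orthocomplementation} --- and such maps are never induced by unitary or antiunitary operators. The correct statement is that this alternative is \emph{excluded} by the hypothesis: the orthocomplement of a $k$-dimensional subspace has dimension $\dim(\hh)-k\neq k$ (or is infinite-dimensional), so a duality-type map cannot carry $\prk$ into $\prk$ unless $\dim(\hh)=2k$, which is precisely the excluded value. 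This is worth getting right because the case $\dim(\hh)=2k$ is exactly where the paper's counter-example lives: for $n=2k$ the map $\linp$ of~(\ref{deflinp}) restricted to $\prk$ is orthocomplementation $P\mapsto\pper$, a Chow duality map that preserves orthogonality (and adjacency) in both directions without being a symmetry transformation; its existence refutes any ``absorption'' of the duality branch into the antiunitary one. Your proof survives because the exclusion holds under the stated hypothesis anyway, but the justification you gave is wrong, and with your phrasing the argument would appear, incorrectly, to work at $\dim(\hh)=2k$ as well. A smaller point: Faure--Fr\"olicher is the fundamental theorem of projective geometry (the rank-one, lattice-morphism level); for adjacency-preserving bijections of Grassmannians of $k$-dimensional subspaces of an infinite-dimensional space you need the genuine Chow-type generalisations, so, as you yourself anticipate, the citations at that step need care.
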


\begin{proof}
The original result is worked out in the case where $\dim(\hh)=\infty$,
but it is clear from the proof that it does hold in the case where $2\le 2k < \dim(\hh)<\infty$,
as well; also see the remark on p.~{572} of~\cite{Semrl}.
\end{proof}

Theorem~\ref{teopre} will turn out to be central in the proof of our main result,
i.e., of Theorem~\ref{mainth} \emph{infra}.

%%%---------------------------------------------------------------------------------
\section{Technical interlude}
\label{interlude}
%%%---------------------------------------------------------------------------------

Before stating and proving the main results of the paper --- see sect.~\ref{main} ---
in this section we need to establish some useful facts.

We start with two simple technical lemmas.

\begin{lemma} \label{lemspa}
For every $k\in\nat$ --- with $k<\dim(\hh)$ --- $\spanr(\prk)=\frsa$.
\end{lemma}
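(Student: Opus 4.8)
The plan is to prove the two inclusions separately. The inclusion $\spanr(\prk) \subseteq \frsa$ is immediate, since each rank-$k$ projection is a selfadjoint operator of finite rank; the substance of the lemma lies in the reverse inclusion. Because $\frsa = \spanr(\pru)$, it suffices to show that every rank-one projection already belongs to $\spanr(\prk)$.

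To this end I would proceed in two steps. First, I would realize differences of rank-one projections as differences of rank-$k$ projections. Given two orthonormal vectors $u, w$ in $\hh$, the hypothesis $k < \dim(\hh)$ ensures $\dim(\hh) \geq k+1$, so one can adjoin $k-1$ further orthonormal vectors $h_1, \ldots, h_{k-1}$ orthogonal to both. Writing $P_x$ for the rank-one projection onto the span of a unit vector $x$, the two rank-$k$ projections $Q_u = P_u + \sum_{i=1}^{k-1} P_{h_i}$ and $Q_w = P_w + \sum_{i=1}^{k-1} P_{h_i}$ satisfy $Q_u - Q_w = P_u - P_w$, which therefore lies in $\spanr(\prk)$.

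Second, I would recombine such differences with a single rank-$k$ projection so as to isolate one rank-one projection. Fixing an orthonormal system $e_1, \ldots, e_k$ and setting $P_0 = \sum_{i=1}^k P_{e_i} \in \prk$, each difference $P_{e_1} - P_{e_j}$ ($j = 2, \ldots, k$) lies in $\spanr(\prk)$ by the first step, and the identity
$$ P_0 + \sum_{j=2}^k \left( P_{e_1} - P_{e_j} \right) = k \, P_{e_1} $$
exhibits $P_{e_1}$ as an element of $\spanr(\prk)$. Since $e_1$ is an arbitrary unit vector, this yields $\pru \subseteq \spanr(\prk)$ and hence the desired equality $\spanr(\prk) = \frsa$.

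The one point deserving care is the role of the hypothesis $k < \dim(\hh)$: it is used precisely to guarantee the existence of the common complementary block $h_1, \ldots, h_{k-1}$, without which a prescribed pair of orthogonal rank-one projections cannot be fitted inside two rank-$k$ projections sharing the same remaining directions. I expect this to be the only genuine obstacle, and it is an essential one; indeed, when $k = \dim(\hh) < \infty$ the set $\prk$ collapses to $\{\id\}$, so $\spanr(\prk)$ is one-dimensional and the statement fails.
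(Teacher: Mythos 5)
Your proof is correct and takes essentially the same route as the paper: both reduce the lemma to showing that every rank-one projection is a real linear combination of elements of $\prk$, with the hypothesis $k<\dim(\hh)$ used exactly to provide the extra orthogonal directions. The only difference is that the paper cites this as a well-known fact (asserting that $k+1$ elements of $\prk$ suffice) while you give an explicit construction — yours uses $2k-1$ projections ($P_0$ together with the pairs realizing the differences), which is immaterial for the span — so your write-up is self-contained where the paper's is a citation.
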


\begin{proof}
It is a well known elementary fact that every rank-one projection
can be written as a (real) linear combination $k+1$ elements of $\prk$,
whenever $k<\dim(\hh)$. The assertion immediately follows.
\end{proof}

\begin{lemma} \label{finilem}
For $n=\dim(\hh)<\infty$, let $\lima\colon\bousa\rightarrow\bousa$ be a linear map
which maps $\prk$ --- with $k<n$ --- onto itself: $\lima(\prk)=\prk$.
Then, $\lima$ maps $\bousa$ bijectively onto itself.
\end{lemma}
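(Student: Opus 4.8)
The plan is to reduce the claim to an elementary fact of finite-dimensional linear algebra, the real content having already been supplied by Lemma~\ref{lemspa}. Since $n=\dim(\hh)<\infty$, I would first recall that $\frsa=\bousa$ as real vector spaces, a space of finite real dimension $n^2$. Because $k<n$, Lemma~\ref{lemspa} yields $\spanr(\prk)=\frsa=\bousa$; that is, the set $\prk$ of rank-$k$ projections is a spanning subset of $\bousa$.

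First I would determine the range of $\lima$. Using linearity together with the hypothesis $\lima(\prk)=\prk$, I would write
\begin{equation}
\ran(\lima)=\lima(\bousa)=\lima\big(\spanr(\prk)\big)=\spanr\big(\lima(\prk)\big)=\spanr(\prk)=\bousa\fin ,
\end{equation}
which shows that $\lima$ is surjective. Observe that only the inclusion $\prk\subseteq\lima(\prk)$ is actually used here --- it is what guarantees that the image of $\lima$ contains a spanning set --- so surjectivity in fact follows already from this weaker half of the hypothesis.

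It then remains to upgrade surjectivity to bijectivity. Since $\lima$ is a surjective linear endomorphism of the finite-dimensional space $\bousa$, the rank-nullity theorem gives $\dim\ker(\lima)=\dim(\bousa)-\dim\ran(\lima)=0$, so $\lima$ is injective and hence bijective. I expect no genuine obstacle in this argument: its entire substance is the spanning property of $\prk$ furnished by Lemma~\ref{lemspa} and the fact that, for endomorphisms of a finite-dimensional space, surjectivity and bijectivity coincide. The only point worth flagging is that the constraint $k<n$ is indispensable, since in its absence $\prk$ fails to span $\bousa$ (in the extreme case $k=n$ it reduces to the single point $\id$) and the surjectivity step breaks down.
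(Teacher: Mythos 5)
Your proof is correct and follows essentially the same route as the paper: Lemma~\ref{lemspa} gives $\spanr(\prk)=\bousa$, the hypothesis $\lima(\prk)=\prk$ then yields $\ran(\lima)=\bousa$, and finite-dimensionality (rank--nullity) upgrades surjectivity to bijectivity. Your explicit range computation and the observation that only the inclusion $\prk\subseteq\lima(\prk)$ is needed for surjectivity simply make precise what the paper leaves as ``easy to check.''
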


\begin{proof}
Since, by Lemma~\ref{lemspa}, $\spanr(\prk)=\frsa=\bousa$ ($n=\dim(\hh)<\infty$) and
$\lima$ maps $\prk$ onto itself, it is easy to check that $\ran(\lima)=\bousa$.
Thus, $\lima$ is a surjective linear map from $\bousa$ onto itself and $\dim(\bousa)=n^2<\infty$,
so that $\dim(\ker(\lima))=0$ and $\lima$ is bijective.
\end{proof}

We are now ready to prove the main facts of this section. The first result is a
generalization of Proposition~{1} of~\cite{Sarbicki}, and takes into account, in particular,
the fact that the Hilbert space we deal with may be infinite-dimensional. Next, a simple observation
--- Proposition~\ref{uncase} below --- will allow us to obtain a suitable refinement of this result,
refinement that will be central in the proof of the main theorem.

\begin{proposition} \label{lemprea}
Let $\lima$ be a linear operator in $\trcsa$, defined on $\dom(\lima)=\frsa$.
Suppose that for some $k\in\nat$ --- with $2k\le \dim(\hh)$, if $\dim(\hh)<\infty$ ---
$\lima$ maps $\prk$ onto itself. In the case where $\dim(\hh)=\infty$ and $k>1$, assume also that
$\lima$ is injective. Then, $\lima$ preserves the mutual orthogonality in $\prk$:
\begin{equation} \label{dir}
P,Q\in\prk \fin , \ P\tre Q=0  \ \ \Longrightarrow \ \ \lima(P)\quattro \lima(Q)=0 \fin .
\end{equation}
\end{proposition}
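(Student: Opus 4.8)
The plan is to translate orthogonality in $\prk$ into the face structure of the convex body generated by $\prk$, where it becomes a statement about barycenters that affine maps automatically respect. First I would record two elementary facts. Since $\lima(\prk)=\prk$ and, by Lemma~\ref{lemspa}, $\spanr(\prk)=\frsa$, linearity gives $\tr(\lima(A))=\tr(A)$ for every $A\in\frsa$, so $\lima$ is trace preserving. Moreover, by the Ky~Fan/Schur--Horn description of the ``$k$-th fermionic body'', the convex hull $\mathcal{C}\defi\mathrm{conv}(\prk)=\{A\in\frsa\colon 0\le A\le\id,\ \tr(A)=k\}$ is a convex body whose set of extreme points is exactly $\prk$.

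Next I would show that $\lima$ restricts to an affine bijection of $\mathcal{C}$ onto itself. In the finite-dimensional case (where $2k\le\dim(\hh)$ forces $k<\dim(\hh)$) this is immediate from Lemma~\ref{finilem}: $\lima$ is a linear bijection of $\bousa$ with $\lima(\prk)=\prk$, so $\lima(\mathcal{C})=\mathrm{conv}(\lima(\prk))=\mathcal{C}$. In the infinite-dimensional case with $k>1$, injectivity together with $\lima(\prk)=\prk$ makes $\lima$ a bijection of $\prk$, hence an affine bijection of $\mathcal{C}=\mathrm{conv}(\prk)$ onto itself. The infinite-dimensional case $k=1$ is the setting of Wigner's theorem and is already covered by Theorem~\ref{linwig}, whose conclusion is a genuine symmetry and therefore orthogonality preserving; this is precisely why no injectivity hypothesis is imposed there.

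The heart of the argument is then purely convex-geometric. An affine bijection of $\mathcal{C}$ permutes the faces of $\mathcal{C}$ and carries the barycenter of each face to the barycenter of its image. Writing $\mathcal{F}_{a,b}$ for the body $\{A\colon 0\le A\le\id,\ \tr(A)=a\}$ of operators on a $b$-dimensional space, the finite-dimensional faces of $\mathcal{C}$ are indexed by pairs of subspaces $V\subseteq W$ with $\dim V\le k\le\dim W$: the face $F_{V,W}\defi\{A\in\mathcal{C}\colon \Pi_V\le A\le\Pi_W\}$ (with $\Pi_V,\Pi_W$ the orthogonal projections onto $V,W$) is affinely isomorphic to $\mathcal{F}_{k-\dim V,\,\dim W-\dim V}$, has extreme points the rank-$k$ projections $P$ with $V\subseteq\ran P\subseteq W$, and has barycenter $\Pi_V+\tfrac{k-\dim V}{\dim W-\dim V}(\Pi_W-\Pi_V)$. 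Now take $P,Q\in\prk$ with $PQ=0$; then $S\defi P+Q$ is a rank-$2k$ projection and $F\defi F_{\{0\},\ran S}$ is a face of $\mathcal{C}$, isomorphic to $\mathcal{F}_{k,2k}$, containing $P,Q$ as extreme points and having barycenter $\tfrac12 S=\tfrac12(P+Q)$. Its image $\lima(F)$ is again a face isomorphic to $\mathcal{F}_{k,2k}$; writing $\lima(F)=F_{V',W'}$, a comparison of dimensions ($4k^2-1$) forces $\dim W'-\dim V'=2k$, and the isomorphism type $\mathcal{F}_{k,2k}$ (not merely the dimension) then forces $\dim V'=0$. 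Hence $\lima(F)=F_{\{0\},W'}$ for a $2k$-dimensional $W'$, with barycenter $\tfrac12\Pi_{W'}$. Since barycenters are preserved, $\tfrac12(\lima(P)+\lima(Q))=\lima\big(\tfrac12 S\big)=\tfrac12\Pi_{W'}$, i.e.\ $\lima(P)+\lima(Q)=\Pi_{W'}$ is a projection. Finally, whenever a sum of two projections $A+B$ is itself a projection one gets $AB+BA=0$ and therefore $AB=BA=0$; with $A=\lima(P)$, $B=\lima(Q)$ this yields $\lima(P)\,\lima(Q)=0$.

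The main obstacle I expect is the bookkeeping in the face step, namely verifying that $\lima(F)$ is genuinely of the form $F_{\{0\},W'}$: a nonzero ``fixed subspace'' $V'$ would only force $\lima(P)$ and $\lima(Q)$ to coincide on $V'$ and be orthogonal on its complement, which is strictly weaker than $\lima(P)\,\lima(Q)=0$. This is exactly where the dimension count together with the isomorphism type $\mathcal{F}_{k,2k}$ is essential, and where the hypothesis $2k\le\dim(\hh)$ — ensuring that the face $F_{\{0\},\ran S}$ exists at all — enters. A secondary point needing care is the claim that an affine bijection preserves barycenters of faces; I would justify it through the fact that the barycenter taken with respect to volume measure on the affine hull of a face commutes with affine maps, combined with the symmetry computation identifying the barycenter of $\mathcal{F}_{k,2k}$ with $\tfrac12\id$.
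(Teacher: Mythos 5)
Your proposal is correct in substance, but it follows a genuinely different route from the paper's. The paper argues operator-theoretically: for a $2k$-dimensional subspace $\kk$ with $\pkk=P+Q$, it first shows by a kernel/range argument that $\lima$ maps operators supported in $\kk$ into operators supported in $\imkk=\ran(\lima(\pkk))$, then uses injectivity to run a dimension count on the operator spaces ($\dim(\bkk)=\dim(\bimkk)$, forcing $\dim(\imkk)=2k$), and finally kills the overlap term in the algebraic decomposition $\lima(\pkk)=\bp+\bq+2R$ by a rank count, so that $\lima(\pkk)$ is a rank-$2k$ projection. You reach the same intermediate goal --- $\lima(P)+\lima(Q)$ is a projection --- through the affine geometry of $\mathcal{C}=\mathrm{conv}(\prk)$: its face lattice, affine isomorphism types of faces, and barycenters. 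Both proofs pivot on injectivity plus a dimension count (yours to make $\lima|_{\mathcal{C}}$ a face-permuting affine bijection and to force $\dim W'-\dim V'=2k$; the paper's to equate $\dim(\bkk)$ with $\dim(\bimkk)$). What your route buys is conceptual transparency --- only the affine structure of $\mathrm{conv}(\prk)$ is used, and the exclusion of a nonzero fixed subspace $V'$ is exposed as an affine-invariance phenomenon; what it costs is that two structural facts are asserted rather than proved, and both must be supplied for the argument to be complete: (i) the identification of $\mathcal{C}$ with $\{A\in\frsa\colon 0\le A\le\id,\ \tr(A)=k\}$ together with the classification of its finite-dimensional faces as the sets $F_{V,W}$ (true, and provable by showing that the minimal face containing a point $A$ is $F_{V,W}$ with $V=\ker(\id-A)$ and $W=\ran(A)$); and (ii) the rigidity claim that an affine isomorphism $\mathcal{F}_{a,2k}\cong\mathcal{F}_{k,2k}$ forces $a=k$. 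For (ii), ``isomorphism type'' must be converted into an actual invariant: for instance, extreme-point sets of these bodies are complex Grassmannians of real dimension $2a(2k-a)$, an affine isomorphism restricts to a homeomorphism between them, and $2a(2k-a)=2k^2$ gives $(a-k)^2=0$; alternatively, $\mathcal{F}_{a,b}$ is centrally symmetric precisely when $b=2a$, and central symmetry is an affine invariant. With (i) and (ii) filled in, your proof is complete, including the boundary case $2k=\dim(\hh)<\infty$, where $F=\mathcal{C}$ itself and your barycenter argument additionally yields $\lima(\id)=\id$, consistent with Corollary~\ref{coro} of the paper.
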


\begin{proof}
By Theorem~\ref{linwig}, relation~{(\ref{dir})} is true for $k=1$ (note, however, that the
following proof works for $\dim(\hh)<\infty$ and $k=1$, as well).

The statement is proven if we show that $\lima$ maps $\prdk$ into itself.
Indeed, in this case, we have:
\begin{eqnarray}
P,Q\in\prk, \  P\tre Q=0  & \Longrightarrow &  \lima(P+Q)= \lima(P) + \lima(Q)\in\prdk,
\ \lima(P),\lima(Q)\in\prk
\nonumber \\
& \Longrightarrow &  \lima(P)\tre\lima(Q)=0 \fin .
\end{eqnarray}

Then, let $\pkk$ be the orthogonal projection onto a $2k$-dimensional subspace $\kk$ of $\hh$.
Let us express $\pkk$ as the sum of two --- necessarily mutually orthogonal ---
projections of rank $k$: $\pkk=P+Q$, $P,Q\in\prk$. Setting $\imp\equiv\lima(P)\in\prk$,
$\imq\equiv\lima(Q)\in\prk$ and $\impkk\equiv\lima(\pkk)$, it turns out that
\begin{equation}
\ran(\imp), \tre \ran(\imq) \subset \ran (\impkk) \fin .
\end{equation}
In fact, we have that
\begin{eqnarray}
\ran(\imp)= \ker(\imp)^\perp \spa & \subset & \spa
\big(\mcinque\ker(\imp)\cap\tre\ker(\imq)\big)^\perp
\nonumber \\
& = & \spa
\big(\ran(\imp)^\perp\mtre\cap\tre\ran(\imq)^\perp\big)^\perp
\nonumber \\
& = & \spa
\ran(\imp) + \ran(\imq)
\nonumber \\
& = & \spa \ran(\imp+\imq) = \ran(\impkk) \fin ,
\end{eqnarray}
and analogously $\ran(\imq)\subset\ran (\impkk)$. Hence, denoting by $\prkk\subset\prk$
the set of projections of rank $k$ with range contained in $\kk$, we conclude that
\begin{equation}
\ran(\lima(P))\subset\imkk\equiv\ran(\impkk) \fin , \ \ \ \forall\cinque P \in\prkk \fin .
\end{equation}
Denoting by $\bkk$, $\bimkk$, the spaces of all (finite rank) linear selfadjoint operators in $\hh$
with range contained in $\kk$ and $\imkk$, respectively, by the previous relation we have that
\begin{equation} \label{cont}
\imprkk\defi\lima(\prkk)\subset\primkk\subset\bimkk
\end{equation}
and, because $\spanr(\prkk)=\bkk$,
\begin{equation} \label{inclu}
\imbkk\defi\lima(\bkk)=\lima(\spanr(\prkk))
=\spanr(\lima(\prkk))=\spanr(\imprkk)\subset\bimkk \fin .
\end{equation}

By the previous relation, we can define the linear map
\begin{equation}
\limr\colon\bkk\ni A\mapsto\lima(A)\in\bimkk \fin .
\end{equation}
Observe that $\limr$ is \emph{injective}: if $\dim(\hh)<\infty$, by the fact that
$\lima$ maps $\prk$ onto itself and by Lemma~\ref{finilem}, so that $\lima$ is injective;
if $\dim(\hh)=\infty$ and $k>1$, because in this case $\lima$ is injective by assumption.

Since $\limr$ is an injective linear map, the dimension of its domain coincides
with that of its range: $\dim(\bkk)=\dim(\imbkk)$; hence:
\begin{equation}
\dim(\bkk)=\dim(\imbkk)\le\dim(\bimkk) \ \ \Longrightarrow \ \
\dim(\kk)\le\dim(\imkk) \fin .
\end{equation}
On the other hand, we also have:
\begin{equation}
\impkk = \imp + \imq \ \ \Longrightarrow \ \
\dim(\imkk)=\dim(\ran(\impkk))\le 2k = \dim(\kk) \fin .
\end{equation}
Hence, it is now clear that
\begin{equation}
2k=\dim(\kk)=\dim(\imkk)=\dim(\ran(\impkk))=\rank(\impkk) \fin ,
\ \ 4k^2=\dim(\bkk)=\dim(\bimkk) \fin .
\end{equation}
Moreover, $\limr$ is a bijection and
\begin{equation}
\lima(\bkk) = \limr(\bkk) = \bimkk \fin .
\end{equation}
Therefore, $\lima$ maps $\bkk$ bijectively onto $\bimkk$ and,
in particular, it maps $\prkk$ bijectively onto $\primkk$
(note that $\lima^{-1}(\primkk)=\bkk\cap\prk=\prkk$, because $\lima(\prk)=\prk$ and
$\lima$ is injective).

Now, take any $\neop\in\primkk$ and observe that
\begin{equation}
\impkk - \neop = \lima(\pkk) - \lima(P) = \lima(\pkk - P) \fin ,
\ \ \ \mbox{for some} \ P\in\prkk \fin .
\end{equation}
Thus: $\neoq\defi\impkk - \neop\in\primkk$. One can choose $\neop$ in such a way that
\begin{equation} \label{commu}
\neop\sei\impkk=\impkk\sei\neop \fin ;
\end{equation}
this amounts to choosing $\neop$ as the sum of mutually orthogonal rank-one eigenprojections of $\impkk$.
By the definition of $\neoq$ and by~{(\ref{commu})}, we have:
\begin{equation}
\neoq\tre\neop=\big(\impkk - \neop\big) \neop = \impkk \tre\neop - \neop, \ \ \
\neop\tre\neoq = \neop\tre\impkk - \neop = \impkk \neop - \neop
= \neoq\tre\neop \fin .
\end{equation}
It follows that $\impkk$ can be expressed as
\begin{equation} \label{decom}
\impkk = \bp + \bq + 2R \fin ,
\end{equation}
where
\begin{equation}
\bp\defi \neop - \neop\tre\neoq \fin , \ \
\bq\defi \neoq - \neop\tre\neoq \ \
\mbox{and} \ \ R\defi\neop\tre\neoq = \neoq\tre\neop
\end{equation}
are \emph{mutually orthogonal projections} (as the reader may easily check);
in particular, $R$ is the projection onto $\ran(\neop)\cap\tre\ran(\neoq)$.
Observe that from the decomposition~{(\ref{decom})} we derive the following conclusion
about the rank of $\impkk$ (one may equivalently argue using the trace):
\begin{equation}
2k=\rank(\impkk)=\rank(\bp) + \rank(\bq) + \rank(R) =
(k-\rank(R)) + (k-\rank(R)) + \rank(R) \fin .
\end{equation}
Hence, $R=0$ and the projections $\neop$, $\neoq$ are mutually orthogonal, so that
$\impkk=\lima(\pkk)=\pimkk$ is a projection too, the projection onto the subspace $\imkk$.

By the arbitrariness of the $2k$-dimensional subspace $\kk$, it follows
$\lima$ maps $\prdk$ into itself, and the proof is complete.
\end{proof}

\begin{proposition} \label{uncase}
Let $\hh$ be infinite-dimensional, and let $\lima$ be a linear operator in $\trcsa$,
defined on $\dom(\lima)=\frsa$, and mapping $\prk$ onto itself. If $k>1$, assume also that
$\lima$ is injective. Then, $\lima$ maps $\frsa$ bijectively onto itself.
\end{proposition}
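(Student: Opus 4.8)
The plan is to prove the two halves of bijectivity separately, namely surjectivity onto $\frsa$ and injectivity, and to keep firmly in mind that in the infinite-dimensional setting the former does \emph{not} entail the latter, so that the injectivity hypothesis for $k>1$ (resp.\ the appeal to Wigner's theorem for $k=1$) is precisely what carries the argument.

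First I would dispose of surjectivity by a pure span argument, which needs no case distinction. Since $\hh$ is infinite-dimensional we always have $k<\dim(\hh)$, so Lemma~\ref{lemspa} gives $\spanr(\prk)=\frsa$. Exploiting the linearity of $\lima$ together with the hypothesis $\lima(\prk)=\prk$, I would then compute
\begin{equation}
\ran(\lima)=\lima(\spanr(\prk))=\spanr(\lima(\prk))=\spanr(\prk)=\frsa \fin ,
\end{equation}
which shows at once that $\lima$ maps $\frsa$ onto itself.

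It then remains to establish injectivity, and here the value of $k$ does matter. For $k>1$, injectivity is part of the standing hypotheses, so there is nothing further to check. For $k=1$ we have $\prk=\pru$, and the assumption $\lima(\pru)=\pru$ places us exactly in the setting of the linear version of Wigner's theorem (Theorem~\ref{linwig}): that result yields that $\lima$ is closable and that its closure $\limacl$ is a surjective isometry of the canonical form $\limacl(A)=U A \sei U^\ast$, with $U$ unitary or antiunitary. Such a map is a bijection of $\trcsa$ (conjugation by $U$ being inverted by conjugation by $U^\ast$), hence injective; therefore its restriction $\lima=\limacl|_{\frsa}$ is injective as well. Combining the two parts, $\lima$ is an injective linear map of $\frsa$ onto $\frsa$, i.e.\ a bijection, which is the assertion.

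The only genuinely delicate point is conceptual rather than computational: because $\frsa$ is infinite-dimensional, surjectivity conveys no information whatsoever about injectivity, so one cannot shortcut the argument by a dimension count of the kind used in Lemma~\ref{finilem}. The injectivity has to be supplied from the outside — by the hypothesis itself when $k>1$, and by Wigner's theorem when $k=1$ — and I expect recognizing exactly this to be the substance of the proof, the remaining manipulations being immediate.
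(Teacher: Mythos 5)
Your proof is correct and follows essentially the same route as the paper's: surjectivity onto $\frsa$ via Lemma~\ref{lemspa} and the span identity $\lima(\spanr(\prk))=\spanr(\lima(\prk))$, injectivity from the hypothesis when $k>1$, and from Theorem~\ref{linwig} when $k=1$. The paper merely organizes the two cases slightly differently (citing Theorem~\ref{linwig} wholesale for $k=1$, since the canonical form $A\mapsto U A\, U^\ast$ already restricts to a bijection of $\frsa$), but the ingredients and logic are identical to yours.
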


\begin{proof}
If $k=1$, the statement is true by Theorem~\ref{linwig}.
If ($\dim(\hh)=\infty$ and) $k>1$, $\lima$ is injective by hypothesis,
and, exploiting Lemma~\ref{lemspa}, one concludes that $\lima(\frsa)=\frsa$, as well.
\end{proof}

As anticipated, exploiting the previous fact we can now obtain a very useful refinement
of Proposition~\ref{lemprea}.

\begin{proposition} \label{probodir}
With the same assumptions of Proposition~\ref{lemprea}, $\lima$ maps
$\frsa$ and $\prk$ bijectively onto themselves, and preserves the mutual orthogonality of
elements of $\prk$ in \emph{both} directions; i.e.,
\begin{equation} \label{bodir}
P\tre Q=0 \ \ \Longleftrightarrow \ \ \lima(P)\tre \lima(Q)=0 \fin, \ \ \
P,Q\in\prk \fin .
\end{equation}
\end{proposition}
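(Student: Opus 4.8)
The plan is to reduce both assertions to results already established, the crucial device being to apply the one-directional statement of Proposition~\ref{lemprea} not only to $\lima$ but also to its inverse. First I would settle the bijectivity claim, which is essentially a matter of collecting the earlier results. If $\dim(\hh)<\infty$, then Lemma~\ref{finilem} already guarantees that $\lima$ maps $\bousa=\frsa$ bijectively onto itself; if $\dim(\hh)=\infty$, the same conclusion for $\frsa$ is provided by Proposition~\ref{uncase} (for $k=1$ via Theorem~\ref{linwig}, and for $k>1$ from the assumed injectivity together with Lemma~\ref{lemspa}). In either case $\lima$ is an injective linear bijection of $\frsa$ onto itself. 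Since by hypothesis $\lima(\prk)=\prk$ and $\lima$ is injective, its restriction to $\prk$ is a bijection of $\prk$ onto itself.

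The key observation is then that $\lima^{-1}$ satisfies exactly the same hypotheses as $\lima$. Indeed, $\lima^{-1}$ is a linear operator defined on all of $\frsa$ (because $\lima$ is a bijection of $\frsa$); it maps $\prk$ onto itself (because $\lima$ does); and it is injective, being a bijection. The dimensional constraint $2k\le\dim(\hh)$ in the case $\dim(\hh)<\infty$, and the standing injectivity assumption in the case $\dim(\hh)=\infty$, $k>1$, are conditions on $k$ and $\hh$ alone, and hence hold for $\lima^{-1}$ verbatim. Therefore Proposition~\ref{lemprea} applies equally to $\lima^{-1}$.

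It now remains to combine the two instances. The forward implication~(\ref{dir}) applied to $\lima$ gives directly the $(\Rightarrow)$ direction of~(\ref{bodir}). For the converse, suppose $\lima(P)\tre\lima(Q)=0$ with $P,Q\in\prk$, and set $P'\defi\lima(P)$, $Q'\defi\lima(Q)$, both in $\prk$. Applying~(\ref{dir}) to $\lima^{-1}$ yields $\lima^{-1}(P')\tre\lima^{-1}(Q')=0$, that is, $P\tre Q=0$. This establishes~(\ref{bodir}) in both directions.

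The only point requiring genuine care --- and the step I would expect to be the main obstacle --- is the transfer of hypotheses to $\lima^{-1}$: one must be sure that $\lima^{-1}$ is again everywhere defined on $\frsa$ and inherits injectivity, which is precisely why the bijectivity established in the first step is indispensable rather than incidental. Once that is in hand, the two-directional orthogonality is immediate, and no new geometric argument beyond Proposition~\ref{lemprea} is needed.
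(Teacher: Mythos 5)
Your proposal is correct and follows essentially the same route as the paper: first establish that $\lima$ maps $\frsa$ (and hence $\prk$) bijectively onto itself via Lemma~\ref{finilem} in the finite-dimensional case and Proposition~\ref{uncase} in the infinite-dimensional case, and then observe that $\lima^{-1}$, being defined on $\ran(\lima)=\frsa$ and satisfying the same hypotheses, allows Proposition~\ref{lemprea} to be applied in the reverse direction. Your careful verification that the hypotheses transfer to $\lima^{-1}$ is exactly the point the paper's proof makes, only more tersely.
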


\begin{proof}
With the assumptions of Proposition~\ref{lemprea}, it turns out that
$\lima$ maps $\frsa$ bijectively onto itself. If $\dim(\hh)<\infty$, by the fact that
$\lima$ maps $\prk$ onto itself and by Lemma~\ref{finilem}.
If $\dim(\hh)=\infty$, by Proposition~\ref{uncase}.
In particular, $\lima$ is injective and $\lima(\prk)=\prk$; i.e., it maps $\prk$
bijectively onto itself.

Then, to complete the proof just observe that the inverse operator $\lima^{-1}$,
which is defined on $\dom(\lima^{-1})=\ran(\lima)=\frsa$, of course
verifies the assumptions of Proposition~\ref{lemprea} too. Hence,
relation~{(\ref{dir})} actually holds in \emph{both} directions.
\end{proof}

The following lemma is essentially Proposition~{2} of~\cite{Sarbicki} (we include
the proof here for the reader's convenience).

\begin{lemma} \label{teclem}
Suppose that $n=\dim(\hh)<\infty$ and set $m=n-k$, for some $k\in\nat$, with $k<n$.
If every linear map $\limb\colon\bousa\rightarrow\bousa$,
mapping $\prm$ bijectively onto itself, is of the canonical form~{(\ref{fincanfor})},
then every linear map $\lima\colon\bousa\rightarrow\bousa$, mapping $\prk$ bijectively
onto itself, is of that canonical form too.
\end{lemma}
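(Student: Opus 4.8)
The plan is to exploit the orthocomplementation map $P\mapsto\pper=\id-P$, which---because $m=n-k$---is a bijection between $\prk$ and $\prm$, in order to manufacture from $\lima$ a map falling under the hypothesis. The naive candidate $P\mapsto\id-\lima(\id-P)$ carries $\prm$ into $\prm$, but it is only \emph{affine}; the device is to convert it into a genuine linear operator on $\bousa$ by absorbing the constant term into a trace-dependent correction. Concretely, I would set
\begin{equation}
\limb(A)\defi\lima(A)+\tfrac{1}{m}\,\tr(A)\big(\id-\lima(\id)\big),\qquad A\in\bousa,
\end{equation}
which is manifestly linear.

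Next I would check that $\limb$ maps $\prm$ bijectively onto itself. For $P\in\prm$ one has $\tr(P)=m$, so $\limb(P)=\lima(P)+\id-\lima(\id)$; and since $\pper=\id-P\in\prk$ gives $\lima(P)=\lima(\id)-\lima(\pper)$ by linearity, this collapses to $\limb(P)=\id-\lima(\pper)$, the orthocomplement of $\lima(\pper)\in\prk$. Hence $\limb(P)\in\prm$, and the assignment $P\mapsto\id-\lima(\pper)$ is a composition of three bijections---orthocomplementation $\prm\to\prk$, the bijection $\lima\colon\prk\to\prk$, and orthocomplementation $\prk\to\prm$---so that $\limb$ indeed maps $\prm$ bijectively onto itself.

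Finally, the hypothesis of the lemma applies to $\limb$, yielding $\limb(A)=UAU^\ast$ with $U$ unitary or antiunitary; in particular $\limb(\id)=UU^\ast=\id$. Evaluating the definition of $\limb$ at $A=\id$, where $\tr(\id)=n$, gives $\id=\lima(\id)+\tfrac{n}{m}(\id-\lima(\id))$, that is $(1-\tfrac{n}{m})\lima(\id)=(1-\tfrac{n}{m})\id$. Since $k\ge 1$ forces $m=n-k<n$, the scalar $1-\tfrac{n}{m}$ is nonzero and therefore $\lima(\id)=\id$. The correction term then vanishes identically, leaving $\lima(A)=\limb(A)=UAU^\ast$ for every $A\in\bousa$, which is the canonical form~{(\ref{fincanfor})}.

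The only genuine subtlety I foresee is the passage from the affine orthocomplementation map to a bona fide linear operator: the trace-correction term is precisely what restores linearity without disturbing the action on $\prm$, and the identity $\lima(\id)=\id$ (equivalently, the vanishing of that correction) is exactly where the constraint $m<n$ is used.
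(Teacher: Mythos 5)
Your proposal is correct, and its central construction is exactly the paper's: you build the same linear map $\limb(A)=\lima(A)+m^{-1}\tr(A)\big(\id-\lima(\id)\big)$, verify in essentially the same way that its restriction to $\prm$ is the map $P\mapsto\id-\lima(\pper)$ (a composition of three bijections, hence a bijection of $\prm$ onto itself --- the paper instead exhibits the inverse explicitly), and then invoke the hypothesis to obtain $\limb(A)=UAU^\ast$. Where you genuinely depart from the paper is in showing that the correction term vanishes. You evaluate the canonical form at $A=\id$: since $U\id\,U^\ast=\id$ whether $U$ is unitary or antiunitary, the definition of $\limb$ gives $\big(1-\tfrac{n}{m}\big)\lima(\id)=\big(1-\tfrac{n}{m}\big)\id$, and $1-\tfrac{n}{m}=-k/m\neq 0$ forces $\lima(\id)=\id$, whence $\lima=\limb$. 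The paper instead evaluates at $P\in\prk$, writes $\lima(P)=UPU^\ast+R$ with $R=k\,m^{-1}(\lima(\id)-\id)$, uses the projection property together with $\tr(R)=0$ to derive the inequality $2\tr(UPU^\ast R)=-\tr\big(R^2\big)\le 0$, and then chooses $P$ so that $UPU^\ast$ is a sum of eigenprojections of $R$ belonging to its $k$ largest eigenvalues, forcing $\tr\big(R^2\big)=0$ and hence $R=0$. Both arguments are valid; yours is shorter and more elementary, replacing the paper's eigenvalue-ordering and trace-inequality step with a single evaluation at the identity, and it isolates cleanly where the constraint $m<n$ (equivalently $k\ge 1$) enters.
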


\begin{proof}
Let $\lima\colon\bousa\rightarrow\bousa$ be linear, and let it map $\prk$ bijectively onto itself,
for some $k<n=\dim(\hh)$; set $m=n-k$. For any $P\in\prk$, $Q=\pper=\id-P$ belongs to $\prm$ and
\begin{equation} \label{firel}
\lima(\id)-\lima(Q)=\lima(P)\ifed\neop\in\prk \fin .
\end{equation}
Consider the projection $\neoq\defi\neopper= \id -\neop\in\prm$; clearly, by~{(\ref{firel})},
\begin{equation} \label{strt}
\neoq = \lima(Q) - \lima(\id) + \id  \fin .
\end{equation}
Therefore, one can define the map
\begin{equation}
\prelima\colon\prm\ni Q\mapsto \lima(Q) - \lima(\id) + \id \in\prm \fin ,
\end{equation}
which is a bijection. In fact, it is easy to see that its inverse is given by
\begin{equation} \label{forpreli}
\prelimmo(\neoq)=(\lima^{-1}(\neoqper))^\perp,
\end{equation}
where we have used the fact that, by Lemma~\ref{finilem}, $\lima$ is bijective
(this is not essential, however; we could have replaced $\lima^{-1}(\neoqper)$,
in~{(\ref{forpreli})}, with $P$ --- $\{P\}=\lima^{-1}(\{\neoqper\})$ --- since $\lima$ is
injective on $\prk$).

Let us now `embed' the map $\prelima$ in the linear map
\begin{equation}
\limb\colon\bousa\ni A \mapsto \lima(A) -
m^{-1} \tre \tr(A)\tre (\lima(\id)-\id)\in\bousa \fin .
\end{equation}
Observe that $\limb$ acts precisely like $\prelima$ on $\prm$; hence,
it maps $\prm$ bijectively onto itself. Therefore, by hypothesis,
$\limb$ must be of the form
\begin{equation}
\limb(A)= U A \sei U^\ast, \ \ \ \forall\cinque A\in\bousa \fin ,
\end{equation}
where $U$ is a unitary or antiunitary operator in $\hh$.
It follows that
\begin{equation}
\lima(A)= U A \sei U^\ast + m^{-1} \tre \tr(A)\tre (\lima(\id)-\id) \fin .
\end{equation}
Let us show that the second term on the rhs of this expression vanishes;
i.e., that $\lima(\id)-\id=0$.

Indeed, in particular, for $A=P\in\prk$, we have:
\begin{equation}
\prk\ni\neop\defi\lima(P)= \imp + R \fin , \ \ \
\imp\defi U P \sei U^\ast\in\prk \fin , \ \ \
R\defi k \sei m^{-1} \tre (\lima(\id)-\id) \fin .
\end{equation}
The fact $\neop$ is a projection entails that
\begin{equation}
\imp + R=\neop = \neopsq = \imp + R^2 + \imp R + R\tre\imp \
\ \Longrightarrow \ \ \imp R + R\tre\imp = R - R^2 \fin .
\end{equation}
Hence, taking into account that $\tr(R)=\tr(\neop)-\tr(\imp)=k - k=0$,
we find that
\begin{equation} \label{nonpos}
2\sei \tr(\imp R)= \tr(R) - \tr\big(R^2\big)=
- \tr\big(R^2\big)\le 0 \fin .
\end{equation}
It is straightforward to check that the previous inequality, which holds for every $\imp\in\prk$,
together with the fact that $\tr(R)=0$, implies that $R=0$. To this aim, choose $\imp$
in~{(\ref{nonpos})} as the sum of mutually orthogonal rank-one eigenprojections of $R$ associated
with the eigenvalues $\lambda_1,\ldots,\lambda_k$, having denoted by $\lambda_1\ge\lambda_2\ge\cdots$
the eigenvalues of $R$, in decreasing order and repeated according to degeneracy.
\end{proof}

%%%---------------------------------------------------------------------------------
\section{Main results}
\label{main}
%%%---------------------------------------------------------------------------------

The sets $\prk$, $k<\dim(\hh)$, are symmetry witness \emph{candidates} (in particular,
their linear span is dense in $\trcsa$; see Lemma~\ref{lemspa}). Now, equipped with
the results outlined in sect.~\ref{background} and with the technical facts derived
in the previous section, we will be able to find out in what cases they actually
\emph{are} symmetry witnesses.

The linear version of Wigner's theorem --- Theorem~\ref{linwig} ---
and the result by Sarbicki \emph{et al.}~\cite{Sarbicki} ---
i.e., Theorem~\ref{linfin} --- turn out to be special cases of our main theorem.

\begin{theorem} \label{mainth}
Let $\hh$ be a sparable complex Hilbert space, and let $\lima$ be a linear operator in $\trcsa$,
defined on the dense domain $\dom(\lima)=\frsa$. Suppose that given some $k\in\nat$
--- with $k=1$, if $\dim(\hh)=2$; with $k<n/2$ or $n/2<k<n$, if $3\le n=\dim(\hh)< \infty$;
arbitrary, if $\dim(\hh)=\infty$ ---
the following conditions are satisfied:
\begin{enumerate}

\item $\lima(\prk)=\prk$;

\item $\lima$ is injective.

\end{enumerate}
(If $\dim(\hh)<\infty$ --- or $\dim(\hh)=\infty$ and $k=1$ --- the second hypothesis
is superfluous, because it is actually implied by the first.)

Then, $\lima$ maps $\frsa$ bijectively onto itself. Moreover, it is closable,
and its closure $\limacl$ is a surjective isometry of the form
\begin{equation} \label{caform}
\limacl(A) = U A \sei U^\ast, \ \ \ \forall\cinque A\in\trcsa \fin ,
\end{equation}
where $U$ is a unitary or antiunitary operator in $\hh$, uniquely defined up to
multiplication by a phase factor.
\end{theorem}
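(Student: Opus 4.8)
The plan is to split the argument into three regimes---$k=1$, infinite dimension with $k>1$, and finite dimension with $k>1$---and to funnel each into one of the two rigidity engines already available: Wigner's theorem (Theorem~\ref{linwig}) when $k=1$, and \v{S}emrl's orthogonality theorem (Theorem~\ref{teopre}) when $k>1$ (in the finite case possibly after the $k\leftrightarrow n-k$ reduction of Lemma~\ref{teclem}). For the \v{S}emrl route the common mechanism is: first upgrade the one-sided behaviour of $\lima$ on $\prk$ to a two-sided symmetry of the orthogonality relation via Proposition~\ref{probodir}; then let \v{S}emrl identify $\lima$ on $\prk$ as $P\mapsto UPU^\ast$; and finally let linearity spread this identification to all of $\frsa$, using $\spanr(\prk)=\frsa$ from Lemma~\ref{lemspa}.

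For $k=1$ in any dimension (this also disposes of the case $\dim(\hh)=2$, where $k=1$ is the only admissible value), the hypothesis $\lima(\pru)=\pru$ is exactly that of Theorem~\ref{linwig}, which directly yields the canonical form together with uniqueness of $U$ up to a phase; so I may assume $k>1$. Suppose first $\dim(\hh)=\infty$. Injectivity of $\lima$ is assumed, so Proposition~\ref{probodir} applies and shows that $\lima$ maps $\frsa$ and $\prk$ bijectively onto themselves and preserves mutual orthogonality in $\prk$ in \emph{both} directions. Since $2\le 2k<\dim(\hh)=\infty$, Theorem~\ref{teopre} then forces $\lima(P)=UPU^\ast$ on $\prk$ for some unitary or antiunitary $U$, and by Lemma~\ref{lemspa} and linearity $\lima(A)=UAU^\ast$ for all $A\in\frsa$.

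Next let $3\le n=\dim(\hh)<\infty$. If $k<n/2$, then $\lima(\prk)=\prk$ already forces bijectivity of $\lima$ on $\bousa=\frsa$ by Lemma~\ref{finilem}, so injectivity is automatic; since $2k<n$, Proposition~\ref{probodir} again delivers two-sided orthogonality preservation, and as $2\le 2k<n$ Theorem~\ref{teopre} gives $\lima(P)=UPU^\ast$ on $\prk$, which linearity extends to $\bousa$. Since this uses nothing special about the given $\lima$, it proves the \emph{universal} statement that every linear map of $\bousa$ carrying $\prk$ bijectively onto itself is canonical whenever $k<n/2$. For the range $n/2<k<n$, I would set $m=n-k$, so that $1\le m<n/2$; the universal statement just obtained applies at rank $m$, which is precisely the hypothesis of Lemma~\ref{teclem}, and that lemma transfers canonicity from rank $m$ to rank $k$, giving $\lima(A)=UAU^\ast$ on $\bousa$. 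The excluded value $k=n/2$ (possible only for even $n$) is exactly where $2k=n$ makes \v{S}emrl's hypothesis fail and the duality collapses onto itself, which is why it is discarded.

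It remains to package the conclusion. Once $\lima(A)=UAU^\ast$ holds on $\frsa$, the finite-dimensional case is complete because $\frsa=\trcsa$. In infinite dimension the map $A\mapsto UAU^\ast$ is a surjective isometry of the whole space $\trcsa$, so $\lima$, being its restriction to the dense domain $\frsa$, is closable, with closure $\limacl(A)=UAU^\ast$ of the stated form~(\ref{caform}); bijectivity of $\lima$ on $\frsa$ has already been recorded, and uniqueness of $U$ up to a phase is the standard Wigner uniqueness. I expect the main obstacle to be bookkeeping rather than depth: one must check in each regime that the precise injectivity and dimension inequalities required by Proposition~\ref{probodir} and Theorem~\ref{teopre}---in particular the strict inequality $2k<n$---are in force, and that the $k\leftrightarrow n-k$ reduction of Lemma~\ref{teclem} is legitimate exactly on $n/2<k<n$, the boundary $k=n/2$ being the one genuine exception.
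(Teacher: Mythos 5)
Your proposal is correct and follows essentially the same route as the paper: Theorem~\ref{linwig} for $k=1$, Proposition~\ref{probodir} plus \v{S}emrl's Theorem~\ref{teopre} and the span Lemma~\ref{lemspa} when $k>1$ with $2k<\dim(\hh)$ (the paper treats the infinite-dimensional and finite $k<n/2$ cases together rather than separately), and the Lemma~\ref{teclem} reduction $m=n-k<n/2$ for $n/2<k<n$, with Wigner uniqueness of $U$ at the end. Your explicit formulation of the ``universal statement'' needed to invoke Lemma~\ref{teclem} is just a clearer packaging of what the paper does implicitly, so there is nothing substantive to add.
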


\begin{proof}
Observe that, according to Lemma~\ref{finilem} and to Theorem~\ref{linwig}, respectively, we have:
if $\hh$ is finite-dimensional, the hypothesis that $\lima$ maps
$\prk$ onto itself --- $k<\dim(\hh)$ --- implies that $\lima$ is injective; if $\dim(\hh)=\infty$ and $k=1$,
the assumption that $\lima$ maps $\pru$ onto itself implies, once again,
that $\lima$ is injective. Thus, the above assertion between parentheses is clarified.

Let us now consider first the case where $k=1$. If $\lima(\pru)=\pru$,
then, by Theorem~\ref{linwig}, $\lima$ is closable and its closure $\limacl$
is of the form~{(\ref{caform})}.

Let us next suppose that $k\ge 2$ and, at first, that $2k<\dim(\hh)$ (thus, $\dim(\hh)\ge 5$),
so that the assumptions of both Proposition~\ref{probodir} --- that coincide
with those of Proposition~\ref{lemprea} --- and Theorem~\ref{teopre} are satisfied.
By the first result, the operator $\lima$, which maps $\prk$ bijectively onto itself,
preserves the mutual orthogonality of the elements of $\prk$, in both directions.
Then, by Theorem~\ref{teopre}, for every $P\in\prk$, we have that $\lima(P)= U P \sei U^\ast$,
for some unitary or antiunitary operator $U$ in $\hh$. By Lemma~\ref{lemspa},
this form of the operator $\lima$ extends by linearity to its domain
$\dom(\lima)=\frsa=\spanr(\prk)$. Then, by the density of $\frsa$ in $\trcsa$, $\lima$
extends to the surjective isometry $\limacl$ defined by~{(\ref{caform})},
which is the closure of $\lima$.

It remains to consider the case where $3\le n=\dim(\hh)<\infty$ and
$n/2<k<n$ ($k\ge 2$). Notice that, in this case, using Lemma~\ref{teclem}
--- with $1\le m=n-k<n/2$ --- one can reduce the problem to one of the previously considered
cases where $1\le k<n/2=\dim(\hh)<\infty$.

Finally, the fact that the unitary or antiunitary operator $U$ in~{(\ref{caform})}
is uniquely defined up to a phase factor follows from Wigner's theorem.
\end{proof}

\begin{remark}
Note that in the preceding proof we have used two different techniques. For $k=1$, we have referred to
Theorem~\ref{linwig}, whose proof relies on a classical result of Davies~\cite{Davies-book,Davies}. For $k>1$, instead,
we have exploited a generalization of Ulhorn's theorem; i.e., Theorem~\ref{teopre}. However, in the
finite-dimensional setting we may have adopted the latter technique for $k=1$ too,
with the only exception of the case where $\dim(\hh)=2$.
This, in fact, is the well known peculiar exception to Ulhorn's theorem.
\end{remark}

\begin{remark}
Clearly, for $3\le n=\dim(\hh)<\infty$, the constraints on $k$ in Theorem~\ref{mainth} entail that,
for $n$ odd, every value of $k<n$ allows one to apply our result, whereas, for $n\ge 4$ and even, the value $k=n/2$
is not contemplated.

Actually, the following counter-example shows that the conclusion of the theorem does not hold in that
particular case. Consider the linear map $\linp\colon\bousa\equiv\trcsa\rightarrow\bousa$ --- with $2\le n=\dim(\hh)=2k$,
$k\in\nat$ --- defined by
\begin{equation} \label{deflinp}
\linp(A)\defi k^{-1} \tre \tr(A)\otto \id - A \fin .
\end{equation}
This map is trace-preserving and unital, but, for $k>1$, \emph{not} positive.
Notice that, for $k>1$, it maps every pure state to a non-positive operator of
rank $n-1=2k-1\neq 1$. $\linp$ maps $\prk$ bijectively onto itself, but,
for $k>1$, it is \emph{not} a symmetry transformation.\footnote{For $k=1$,
we get the celebrated \emph{reduction map} $\linpone$, which, represented in terms of $2\times 2$
hermitian matrices associated with an orthonormal basis, is the form
$M\mapsto\siy\quattro M^\trasp \siy$; thus, a symmetry transformation,
coherently with Theorem~\ref{mainth}.}
It is also worth observing that
\begin{equation} \label{relinp}
\linp\circ\linp = \ide \ \ \ \mbox{and} \ \ \
\linp (U A \sei U^\ast) = U \tre \linp(A) \tre U^\ast,
\end{equation}
where $U$ is a unitary or antiunitary operator in $\hh$. Thus, one can define the
map
\begin{equation} \label{deflinpu}
\linpu\colon \bousa\ni A \mapsto \linp (U A \sei U^\ast) \in \bousa \fin ,
\end{equation}
which, once again, is trace-preserving and unital, but, for $k>1$, \emph{not} positive.
Observe, moreover, that by relations~{(\ref{relinp})} the map $\linpu\circ\linpu$ is always
a unitary transformation.
\end{remark}

\begin{remark}
Let us notice explicitly that Theorem~\ref{mainth} entails that
the symmetry witness candidate $\prk$
--- with the previously specified constraints on $k$ --- is actually a symmetry witness.
In fact, if a linear operator $\lima$ in $\trcsa$, densely defined on
$\domlima=\spanr(\prk)=\frsa$, maps $\prk$ onto itself and
is injective, then it acts as a unitary or antiunitary transformation.
\end{remark}

From Theorem~\ref{mainth} and the previous two remarks, we derive the following fact.

\begin{corollary} \label{coro}
Theorem~\ref{mainth} provides a complete classification of the sets of projections
of a fixed rank that are symmetry witnesses. In particular,
in the case where $4\le n=\dim(\hh)<\infty$ and $n$ is \emph{even},
a linear operator in $\bousa$ mapping $\prnm$ onto itself
is a trace-preserving and unital bijection, but not, in general, a symmetry witness.
\end{corollary}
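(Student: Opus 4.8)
The plan is to combine the affirmative content of Theorem~\ref{mainth} with the single counter-example provided by the map $\linp$ of~(\ref{deflinp}), and, in the borderline rank $k=n/2$, to read off the forced structural properties of any admissible self-map directly from the orthogonality-preservation already established in Section~\ref{interlude}.

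For the completeness of the classification I would argue as follows. For every rank $k$ admissible in the sense of Theorem~\ref{mainth} --- i.e. $k<n/2$ or $n/2<k<n$ when $3\le n=\dim(\hh)<\infty$, $k=1$ when $\dim(\hh)=2$, and arbitrary $k\ge 1$ when $\dim(\hh)=\infty$ --- that theorem says precisely that any injective linear $\lima$ on $\domlima=\spanr(\prk)=\frsa$ with $\lima(\prk)=\prk$ is of canonical form, which by Definition~\ref{witness} means that $\prk$ is a symmetry witness. Since the candidates are exactly the $\prk$ with $1\le k<\dim(\hh)$ (these being the only ranks for which $\spanr(\prk)$ is dense, by Lemma~\ref{lemspa}), and since for $\dim(\hh)=\infty$ or for $\dim(\hh)$ odd every such rank is admissible, the only rank left uncovered is $k=n/2$ with $n$ even; for $n=2$ this is the admissible rank $k=1$, whereas for $n\ge 4$ it is excluded and shown below not to yield a symmetry witness. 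This exhausts all candidates, so the equivalence ``$k$ admissible $\Leftrightarrow$ $\prk$ a symmetry witness'' is complete.

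Now fix $4\le n=\dim(\hh)<\infty$ even, put $k=n/2$, and let $\lima\colon\bousa\to\bousa$ be linear with $\lima(\prnm)=\prnm$. Bijectivity is immediate from Lemma~\ref{finilem}. For trace preservation, every $P\in\prnm$ gives $\tr(\lima(P))=n/2=\tr(P)$, so the linear functionals $\tr\circ\lima$ and $\tr$ coincide on $\prnm$; as $\spanr(\prnm)=\frsa=\bousa$ by Lemma~\ref{lemspa}, they coincide everywhere. For unitality, the decisive point is that the non-strict inequality $2k=n\le\dim(\hh)$ is exactly the hypothesis of Proposition~\ref{lemprea}, whence by Proposition~\ref{probodir} the map $\lima$ preserves mutual orthogonality of elements of $\prnm$ in both directions. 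Given any $P\in\prnm$ we have $\pper=\id-P\in\prnm$ and $P\tre\pper=0$, so $\lima(P)\tre\lima(\pper)=0$; thus $\lima(P)$ and $\lima(\pper)$ are mutually orthogonal projections of rank $n/2$, and their sum is a projection of rank $n$, namely $\id$. Hence $\lima(\id)=\lima(P)+\lima(\pper)=\id$, and $\lima$ is unital.

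To see that $\prnm$ is nonetheless not a symmetry witness, I would exhibit one admissible $\lima$ that is not of canonical form, the natural choice being $\linp$. On $\prnm$ one computes $\linp(P)=k^{-1}\tr(P)\tre\id-P=\id-P=\pper$, so $\linp$ restricts to orthocomplementation and maps $\prnm$ bijectively onto itself; by~(\ref{relinp}) it is an involution, hence an injective linear self-map of $\spanr(\prnm)=\bousa$, and it is a trace-preserving unital bijection. Yet for $k>1$ it is not positive: e.g. it sends a pure state $P$ to $k^{-1}\tre\id-P$, which carries the negative eigenvalue $k^{-1}-1$, whereas any map $A\mapsto U A\sei U^\ast$ preserves positivity. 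So $\linp$ cannot be of canonical form, and $\prnm$ fails the defining property of a symmetry witness. The only genuinely delicate point is conceptual rather than computational: the argument of Theorem~\ref{mainth} breaks down here solely because Theorem~\ref{teopre} requires the \emph{strict} inequality $2k<\dim(\hh)$, which degenerates exactly at $k=n/2$, and the involution $\linp$ demonstrates that this degeneration reflects a genuine obstruction rather than a deficiency of the method.
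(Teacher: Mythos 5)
Your proposal is correct and follows essentially the same route as the paper: Theorem~\ref{mainth} (together with the surrounding remarks) settles the admissible ranks, Lemma~\ref{finilem} gives bijectivity, Proposition~\ref{probodir} applied to $\id=P+\pper$ gives unitality, and the map $\linp$ of~(\ref{deflinp}) serves as the counter-example showing $\prnm$ is not a symmetry witness. You merely spell out a few details the paper leaves implicit (the span argument for trace preservation and the negative eigenvalue $k^{-1}-1$ showing $\linp$ is not positive), which is fine.
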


\begin{proof}
The only assertion to be proven is the one concerning $\prnm$. Indeed, if
a linear map $\lima\colon\bousa\rightarrow\bousa$, with $4\le n=\dim(\hh)<\infty$ and $n$ even,
maps $\prnm$ onto itself, it is bijective (by Lemma~\ref{finilem}) and trace-preserving. Moreover,
by Proposition~\ref{probodir}, expressing the identity in $\bousa$ as the sum
of two mutually orthogonal projections of rank $n/2$, it is clear that $\lima(\id)=\id$, because
the mutual orthogonality of these projections is preserved by $\lima$.
\end{proof}

Recalling Remark~\ref{projcase}, observe that the sets of projections of a fixed rank
that are symmetry witnesses are, in particular, \emph{projectable} positive symmetry witnesses.

\begin{corollary}
Let $\lima$ be a linear operator in $\trcsa$, defined on $\dom(\lima)=\frsa$
and mapping the set $\stauk=\prom(\prk)$, of uniform density operators of a fixed rank $k<\dim(\hh)$,
onto itself. Suppose, moreover, that
\begin{itemize}

\item in the case where $\dim(\hh)=\infty$ and $k>1$, $\lima$ is injective;

\item in the case where $2<2k=\dim(\hh)<\infty$, $\lima$ is positive.

\end{itemize}
Then, the operator $\lima$ is closable, and its closure $\limacl$
is a surjective isometry of the form~{(\ref{caform})}.
If $k<\dim(\hh)$ and --- for $\dim(\hh)\ge 4$ ---
$2k\neq\dim(\hh)$, $\stauk$ is a symmetry witness; otherwise, it is not.
\end{corollary}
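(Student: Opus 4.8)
The plan is to reduce the statement to the already-proved facts about $\prk$ by a trivial rescaling, and then to treat the single borderline rank separately using positivity. First I would record that for any $P\in\prk$ one has $\prom(P)=\tr(|P|)^{-1}\tre|P|=k^{-1}P$, so that $\stauk=k^{-1}\prk$ and, since multiplication by $k^{-1}\neq0$ does not affect the real span, $\spanr(\stauk)=\spanr(\prk)=\frsa$, which is dense in $\trcsa$ by Lemma~\ref{lemspa} (recall $k<\dim(\hh)$). Because $\lima$ is linear the scaling passes through it: for $P\in\prk$ we get $\lima(k^{-1}P)=k^{-1}\lima(P)\in\stauk$, whence $\lima(P)\in\prk$, and surjectivity onto $\stauk$ forces surjectivity onto $\prk$. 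Hence the hypothesis $\lima(\stauk)=\stauk$ is \emph{equivalent} to $\lima(\prk)=\prk$, with the injectivity assumption matching the one in Theorem~\ref{mainth}.

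Granting this equivalence, in every case with $2k\neq\dim(\hh)$ (for $\dim(\hh)\ge4$) the conclusion is immediate from Theorem~\ref{mainth}: the operator $\lima$ maps $\prk$ bijectively onto itself --- injectively by hypothesis when $\dim(\hh)=\infty$ and $k>1$, automatically otherwise --- so it takes the form $\lima(A)=U A \sei U^\ast$ on $\frsa$ and, by the density argument already used in the proof of that theorem, is closable with closure the surjective isometry~(\ref{caform}). The only genuinely new case is the borderline one $2<2k=n=\dim(\hh)<\infty$ (so $n\ge4$ is even and $k=n/2$), which Theorem~\ref{mainth} deliberately excludes; here I would use the extra positivity hypothesis. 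By the rescaling $\lima$ maps $\prnm$ onto itself, so Corollary~\ref{coro} shows it is automatically a trace-preserving and \emph{unital} bijection; adjoining positivity, St{\o}rmer's Theorem~\ref{boucase} (applicable since $k=n/2<n$) gives $\lima(A)=U A \sei U^\ast$, which settles the first assertion in all cases.

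For the classification I would first observe that $\stauk$ is a symmetry witness candidate: it is invariant under every unitary or antiunitary conjugation, as $U(k^{-1}P)U^\ast=k^{-1}(U P \sei U^\ast)$ with $U P \sei U^\ast\in\prk$, and its span is dense by the first paragraph. When $k<\dim(\hh)$ and $2k\neq\dim(\hh)$ for $\dim(\hh)\ge4$, the set $\prk$ is a symmetry witness by Theorem~\ref{mainth}; by the rescaling equivalence, any injective linear $\lima$ with $\dom(\lima)=\spanr(\stauk)=\frsa$ fixing $\stauk$ also fixes $\prk$, hence is canonical, so $\stauk$ is a symmetry witness too. In the excluded case $2k=n\ge4$, I would exhibit the reduction-type map $\linp(A)=k^{-1}\tr(A)\id-A$ as an explicit obstruction: it is linear and injective (indeed $\linp\circ\linp=\ide$), it maps $\prnm$ --- and therefore, by rescaling, $\stauk$ --- bijectively onto itself, yet it is \emph{not} of the canonical form for $k>1$; consequently $\stauk$ is not a symmetry witness.

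The main obstacle is precisely this borderline rank $2k=\dim(\hh)\ge4$: Theorem~\ref{mainth} gives nothing there, and one must route through Corollary~\ref{coro} to obtain unitality and then apply St{\o}rmer's Theorem~\ref{boucase}, so that the positivity hypothesis is genuinely indispensable. Conceptually this case also explains the dichotomy in the statement, since the very map $\linp$ that positivity excludes (it is non-positive for $k>1$) is exactly what prevents $\stauk$ from being a symmetry witness once positivity is no longer assumed.
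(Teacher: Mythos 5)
Your proposal is correct and follows essentially the same route as the paper: the fixed-trace rescaling $\stauk = k^{-1}\prk$ (the content of Remark~\ref{projcase}) reduces everything to Theorem~\ref{mainth} in the generic cases, while the borderline case $2<2k=\dim(\hh)<\infty$ is handled exactly as in the paper via Corollary~\ref{coro} (unitality) plus St{\o}rmer's Theorem~\ref{boucase}, and the negative assertion rests on the map $\linp$. You merely make explicit two steps the paper leaves implicit --- the equivalence $\lima(\stauk)=\stauk \Leftrightarrow \lima(\prk)=\prk$ and the counter-example for $2k=\dim(\hh)\ge 4$ --- which is a faithful elaboration rather than a different argument.
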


\begin{proof}
Taking into account Theorem~\ref{mainth}, Corollary~\ref{coro} and Remark~\ref{projcase},
the only case that we still need to prove is $2k=\dim(\hh)<\infty$, $k>1$, with $\lima$
assumed to be positive. By Corollary~\ref{coro}, we know that $\lima$ is unital, as well.
Hence, by Theorem~\ref{boucase}, $\lima$ is a unitary or antiunitary transformation.
\end{proof}

To conclude this section, it is interesting to consider how Theorem~\ref{mainth}
would be expressed if we had chosen to work in the \emph{complex} Banach space $\trc$
of all trace class operators, rather than in the real Banach space $\trcsa$.

Then, let $\limco$ be a linear operator in $\trc$, with $\domlima=\fr$ (the linear manifold
of finite rank operators in $\trc$), and --- for some $k\in\nat$ satisfying the constraints
specified in the statement of Theorem~\ref{mainth} --- let $\limco$ map $\prk$
onto itself. If $\hh$ is infinite-dimensional and $k>1$, let us further assume that $\limco$
be injective. Since $\spanr(\prk)=\frsa=\fr\cap\trcsa$, the operator $\limco$ is adjoint-preserving
--- $\limco(F^\ast) = \limco(F)^\ast$ --- and $\limco(\frsa)\subset\frsa$. Hence,
$\limco$ induces a linear operator $\lima$ in the real Banach space $\trcsa$, with
$\domlima=\frsa$ --- i.e., $\lima(A)\defi\limco(A)$, $A\in\frsa$ ---
that satisfies the assumptions of Theorem~\ref{mainth}.
It follows that $\lima$ is closable, and its closure $\limacl$ is a unitary or
antiunitary transformation in $\trcsa$. Besides, $\lima$ determines $\limco$
completely; i.e.,
\begin{equation} \label{complexi}
\limco(A + \ima\quattro B) = \lima(A) + \ima\sei \lima(B) \fin ,
\ \ \ A,B\in\trcsa \fin .
\end{equation}
Therefore, $\limco$ is closable too, and its closure $\limcocl$ is a surjective isometry
in $\trc$, which we will now describe. By~{(\ref{complexi})}, it is clear that $\limcocl$
is either of the form
\begin{equation} \label{compforma}
\limcocl(C) = U \tre C \sei U^\ast, \ \ \ \forall\cinque C\in\trc \fin ,
\end{equation}
for some unitary operator $U$, or of the form
\begin{equation} \label{compformb}
\limcocl(C) = U \tre C^\ast\tre U^\ast, \ \ \ \forall\cinque C\in\trc \fin ,
\end{equation}
for some \emph{anti}unitary operator $U$. Here, it is clear that we may have
considered, equivalently, the transformations generated by unitary operators only,
possibly composed with a transposition associated with an orthonormal basis.
Interestingly, for $\dim(\hh)=2k$, $k\in\nat$, considering the natural extension
to $\bou\equiv\trc$ of the linear map $\linp$ defined by~{(\ref{deflinp})}
(extension which we will still denote by the same symbol), one can again
define as in~{(\ref{deflinp})} the related map $\linpu$, for $U$ unitary; whereas
--- if $U$ is \emph{anti}unitary, instead --- one has to set
\begin{equation} \label{deflinpuco}
\linpu(C) \defi k^{-1} \tre \tr(C)\otto \id -  U \tre C^\ast\tre U^\ast
= \linp(U \tre C^\ast\tre U^\ast) = U \tre (\linp(C))^\ast\tre U^\ast \fin .
\end{equation}
Here we have used the fact that, for $U$ antiunitary, $\tr(U \tre C^\ast \tre U^\ast)=\tr(C)$.
Note that, as in the real setting, $\linpu\circ\linpu$ is always a unitary transformation.

%%%--------------------------------------------------------------------------------
\section{Final remarks and conclusions}
\label{conclusions}
%%%--------------------------------------------------------------------------------

A symmetry witness is a suitable subset $\syw$ of the space $\trcsa$ of selfadjoint trace class
operators that allows one to determine whether an injective linear operator $\lima$ in $\trcsa$, whose domain
contains $\syw$, is a symmetry transformation, depending on whether it leaves the witness invariant;
i.e., whether $\lima(\syw)=\syw$. We observe explicitly that it is not the detailed action of the operator
which is relevant here: the only relevant information is whether or not $\syw$ is invariant with respect to $\lima$.

By a version of Wigner's theorem --- see Theorem~\ref{linwig} --- the set of pure states $\pru$ is the
\emph{prototype} of a symmetry witness. The fact that dealing with \emph{linear} maps makes the
assumption of preservation of transition probabilities superfluous is known, 
at least in the finite-dimensional case~\cite{Grabowski,Sarbicki}.
Further examples of symmetry witnesses are the convex body $\stah$ of all states (Remark~\ref{linkad})
and the convex set $\frstah\subset\stah$ of finite-rank density operators (Theorem~\ref{linani}).

By a natural generalization of Theorem~\ref{linwig} --- i.e., by our main result, Theorem~\ref{mainth} --- the set
$\prk$, $k<\dim(\hh)$, of projections of a (finite) fixed rank is a symmetry witness too,
with the only exception --- in the case where $4\le n=\dim(\hh)<\infty$ and with $n$ \emph{even} ---
of the value $k=n/2$. In this particular case, a linear operator $\lima$ mapping $\prnm$ bijectively onto
itself is a trace-preserving and unital linear bijection, but, in general, \emph{not} a symmetry transformation.
One obtains a symmetry transformation by further assuming that $\lima$ be a \emph{positive} map.
Therefore, $\prnm$ may be regarded as a \emph{conditional} symmetry witness.

The previously classified symmetry witnesses are \emph{positive} (i.e., consist of positive operators)
and \emph{projectable}; namely, they give rise, via the map $\prom\colon\sico\rightarrow\stah$, to
the symmetry witnesses $\stauk$ that have a direct physical interpretation. They are formed by those
density operators of a (finite) fixed rank whose eigenvalues form a \emph{uniform} (i.e., constant)
probability distribution. Note, moreover, that the set $\stauk$ has a precise meaning in relation with
the natural \emph{majorization} preorder in $\stah$ (see~\cite{Aniello-QE,Aniello-SchCo}, and references therein).
It consists of those elements that are minimal, with respect to majorization, among all elements of
$\stah$ of rank not larger than $k$.

The mentioned results provide a complete solution of a problem considered in~\cite{Sarbicki}:
to achieve a generalization of Wigner's theorem in terms of projections of a fixed rank.
The problem is solved in~\cite{Sarbicki} in the finite-dimensional setting, with the assumption
$n=\dim(\hh)$ be a prime number; see Theorem~\ref{linfin}. A problem of this kind has been
also considered in~\cite{Stormer-new} --- see Theorem~\ref{boucase} --- in a different setting.
Namely, the sets of a (finite) fixed rank projections are embedded in the Banach space of bounded
operators; i.e., in the \emph{dual} of the Banach space of trace class operators. Moreover,
the linear operators are assumed to be positive and unital in~\cite{Stormer-new}. Clearly,
a \emph{direct} comparison with the results of~\cite{Sarbicki}, and with those obtained in
the present contribution, is possible in the finite-dimensional setting only. It turns out that,
in this case, both positivity and unitality can be dispensed with, with the only exception of the previously
mentioned case of $\prnm$, for $n=\dim(\hh)\ge 4$ and even. For these values of $n$, unitality is automatically satisfied
by a linear map mapping $\prnm$ onto itself, whereas positivity entails that the map is
a symmetry transformation. In this regard, we stress that considering the real Banach space $\trcsa$
versus the complex Banach space $\trc$ as a general setting is immaterial; see the discussion
concluding sect.~\ref{main}.

As a final remark, we observe that, according to an interesting conjecture~\cite{Sarbicki},
in the case where $4\le n=\dim(\hh)<\infty$ and $n$ is \emph{even}, every linear map
in $\bousa$ mapping $\prnm$ onto itself would be either a symmetry transformation or
a (trace-preserving, unital but) non-positive map of the form $\linpu$; see~{(\ref{deflinpu})}.
It is not clear, at the moment, whether our results can shed new light on this
intriguing problem too. Certainly, the conjecture can now be reconsidered in a somewhat clearer
general context.

%%%--------------------------------------------------------------------------------------------------------

\section*{Acknowledgments}

D.C.\ was partially supported by the National Science Center project 2015/19/B/ST1/03095.

%%%---------------------------------------------------------------------------------------------------------

%%%------------------------------------------------------------------------------------------

\end{document}